\newtheorem{theorem}{Theorem}[section]
\newtheorem{lemma}[theorem]{Lemma}
\newtheorem{proposition}[theorem]{Proposition}
\newtheorem{remark}[theorem]{Remark}
\newtheorem{definition}[theorem]{Definition}
\newenvironment{proof}{\removelastskip\par\medskip}
{\penalty-20\null\hfill$\square$\par\medbreak}
\begin{document}

\begin{frontmatter}



\title{ Minimum penalized Hellinger distance for model selection in small samples}
\thanks[talk]{ This research was supported, in part, by grants from 
 {\bf AIMS}(African Institute for Mathematical Sciences) 6 Melrose Road, Muizenberg-Cape Town 7945 South Africa }

\author{\small Papa Ngom$^{a}$,  Bertrand Ntep$^{b}$}

\address{$^{a,}$ $^{b}${ \footnotesize 
 LMA -
 Laboratoire de Math\'ematiques Appliqu\'ees \\
 Universit\'e Cheikh Anta Diop \\ BP 5005 Dakar-Fann S\'en\'egal}\\
 $^{a}$ e-mail : papa.ngom@ucad.edu.sn \\
 $^{b}$	ntepjojo@yahoo.fr }

\begin{abstract}
\par 
 In statistical modeling area,  the Akaike information criterion AIC, is a widely known and extensively used tool for model choice. The $\phi$-divergence test statistic is a recently developed tool for statistical model selection. The popularity of the divergence criterion is however tempered by their known lack of robustness in small sample. In this paper the penalized minimum Hellinger distance type statistics are considered and some properties are established.  The limit laws of the estimates and test statistics are given under both the null and the alternative hypotheses, and approximations of the power functions are deduced. A model selection criterion relative to these  divergence measures are   developed for parametric inference. Our interest is in the problem to testing for choosing between two models using some informational type statistics, when independent sample are drawn from a discrete population. Here, we discuss the asymptotic properties and the performance of new procedure tests and investigate their small sample behavior. 
 \\  
\end{abstract}

\begin{keyword}
Generalized information, estimation, hypothesis test, Monte Carlo simulation.\\
  AMS Subject Classification : 62F03, 62F05,  60F40,94A17.
\end{keyword}
\end{frontmatter}
\section{Introduction }
 \par 
 A comprehensive surveys on Pearson Chi-square type statistics 
 has been provided by many authors as Cochran (1952), Watson (1956) 
 and Moore (1978,1986), in particular on quadratics forms in the cell frequencies.  
 Recently, Andrews(1988a, 1988b) has extended the Pearson chi-square testing method to non-dynamic
parametric models, i.e., to models with covariates. Because Pearson chi-square statistics
provide natural measures for the discrepancy between the observed data and a specific
parametric model, they have also been used for discriminating among competing models. 
Such a situation is frequent in Social Sciences where many competing models are
proposed to fit a given sample. A well know difficulty is that each chi-square statistic 
tends to become large without an increase in its degrees of freedom as the sample
size increases. As a consequence goodness-of-fit tests based on Pearson type chi-square
statistics will generally reject the correct specification of every competing model.
\par To circumvent such a difficulty, a popular method for model selection, which is
similar to use of Akaike (1973) Information Criterion (AIC), consists in considering that
the lower the chi-square statistic, the better is the model.
The preceding selection rule, however, does not take into account random variations inherent in the values of the statistics.  

We propose here a procedure for taking into account the stochastic nature of these differences so as to assess their significance. The main propose of this paper is to address this issue. We shall propose some convenient asymptotically standard
normal tests for model selection based on $\phi-$divergence type statistics. Following Vuong (1989, 1993), the procedures considered here are testing the null hypothesis that the competing models are equally close to the  data generating process (DGP) versus the alternative hypothesis that one model is closer to the DGP where closeness of a model is measured according to the discrepancy implicit in the $\phi-$divergence type statistic used. Thus the outcomes of our tests provide information on the strength of the statistical evidence for the choice of a model based on its goodness-of-fit.
The model selection approach proposed here differs from those of Cox (1961, 1962) and Akaike (1974) for non nested hypotheses. 
This difference is that the present approach is based on the discrepancy implicit in the divergence type statistics used, 
while these other approaches as Vuong's (1989) tests for model selection rely on the Kullback-Leibler (1951) information criterion (KLIC). \\
Beran (1977) showed that by using the minimum Hellinger distance estimator, one can simultaneously obtain asymptotic efficiency and robustness properties in the presence of outliers. The works of Simpson (1989) and Lindsay (1994) have shown that, in the tests hypotheses, robust alternatives to the likelihood ratio test can be generated by using the Hellinger distance. 
 We consider a general class of estimators that is very broad and contains most of estimators currently used in practice when forming divergence type statistics. This covers the case studies in Harris and Basu (1994); Basu et al. (1996); Basu and Basu  (1998) where the penalized Hellinger distance is used.\\ 
 The remainder of this paper is organized as follows. Section 2 introduces the basic notations and definitions. Section 3
 gives a short overview of divergence measures. Section 4 investigates the asymptotic distribution of the penalized Hellinger distance. In section 5, some applications for testing hypotheses are proposed. Section 6 presents some simulation results. Section 7 concludes the paper.
\section{Definitions and notation}
In this section, we briefly present the basic assumptions on the model and parameters estimators, and we define our generalized divergence type statistics. \\
We consider a discrete statistical model, i.e $X_1,X_2,\ldots X_n$ an independent random sample from a discrete population with support  $ \mathcal{X} = \{ 1,\ldots,m \}$. Let  $  P = \left( p_1,\ldots, p_m\right)^T$ 
be a probability vector i.e $P\in \Omega_m$ where $\Omega_m$  is the simplex of probability m-vectors, $$\Omega_m=\big\{ \left( p_1,p_2,\ldots,p_m \right)\in \mathbb{R}^ m \ ;  \displaystyle \sum_{i=1}^{m}p_i=1,\ p_i\geq 0, i=1,\dots,m \big\}.$$
We consider a parameter model 
$$ \mathcal{P}= \{ P_{\theta} = \left( p_1(\theta),\ldots, p_m(\theta)\right)^T: \ \theta \in \Theta   \}$$
which may or may not contain the true distribution $P$, where $\Theta$ is a compact subset of k-dimensional Euclidean space (with $k <m-1$). If $ \mathcal{P}$ cointains $P$, then there exists a $\theta_0 \in \Theta$ such that $P_{\theta_0} = P$ and the model $\mathcal{P}$ is said to be correctly specified.

We are interested in testing $$H_0 : P \in {\cal{P}}\  (\hbox{ with true parameter} \ {\theta_0} ) \ \hbox{ versus }  H_1 : P \in \Omega_m - \cal{P}.$$
By $\parallel \cdot \parallel$ we denote the usual Euclidean norm and we interpret probability distributions on $\mathcal{X}$ as row vectors from $\mathbb{R}^m$. For simplicity we restrict ourselves to unknown true parameters $\theta_0$ satisfying the classical regularity conditions given by Birch (1964):
\par 1. True $\theta_0$ is an interior point of $\Theta$ and $ p_{i \theta_0} >0$ for $i=1,\ldots,m$. 
Thus $P_{\theta_0}=\left( p_{1 \theta_0},\ldots,p_{m_{\theta_0}} \right)^T$ is an interior point of the set $\Omega_m$.
\par 2. The mapping  $P:\Theta\longrightarrow\Omega_m$ is totally differentiable at $\theta_0$ so that the partial derivatives of $p_i$ with respect to each $\theta_j$ exist at $\theta_0$ and $p_i(\theta)$ has a linear approximation at $\theta_0$ given by 
$$   p_i(\theta)= p_i(\theta_0)+ \sum_{j=1}^{k} (\theta_j-\theta_{0j})\frac{\partial p_i(\theta_0)}{\partial \theta_j}+o(\parallel\theta-\theta_{0}\parallel)$$
where $ o(\parallel\theta-\theta_{0}\parallel) \hbox{ denotes a function verifying }
 \displaystyle \lim_{\theta\longrightarrow\theta_{0}} \frac{o(\parallel\theta-\theta_{0}\parallel)}{\parallel\theta-\theta_{0}\parallel}=0.$
\par 3. The Jacobian matrix 
$  \displaystyle J(\theta_0)= \left(\dfrac{\partial P_{\theta}}{\partial \theta} \right)_{\theta=\theta_0} = \left( \frac{\partial p_i(\theta_0)}{\partial \theta_j}\right)_{\substack{1\leq i \leq m \\  1\leq j \leq k} } $
is of full rank (i.e.  of rank k and $k < m$).
\par 4. The inverse mapping $P^{-1}:{\cal{P}} \longrightarrow\Theta$ is continuous at $P_{\theta_0}.$
\par 5. The mapping $P:\Theta\longrightarrow\Omega_m$ is continuous at every point $\theta \in \Theta$.\\  
 
 Under the hypothesis that $P\in \mathcal{P}$, there exists an unknown parameter $\theta_0$ such that $P=P_{\theta_0}$ and the problem of point estimation appears in a natural way. Let $n$ be sample size. We can estimate the distribution  
$ P_{\theta_0}=\left( p_1(\theta),p_2(\theta),\ldots,p_m(\theta) \right)^T $
by  the vector of  observed frequencies  $\widehat{P}=(\hat{p}_1,\ldots,\hat{p}_m)$ on $\mathcal{X}$ ie of measurable mapping $\mathcal{X}^n\longrightarrow \Omega_m$.  \\
This non parametric estimator $\widehat{P}=(\hat{p}_1,\ldots,\hat{p}_m)$ is defined by 
\begin{equation}
\hat{p}_j=\frac{N_j}{n}, \quad N_j=\displaystyle {\sum_{i=1}^n} T^i_j(X_i)\label{eq1} 
 \hbox{ \ where \   } T^i_j(X_i) =\left\lbrace\begin{array}{lll}
1 & & \hbox{if } X_i =j\\
0& &  \hbox{otherwise}
\end{array}\right.
\end{equation}
 We can now define the class of $ \phi$-divergence type statistics considered in this paper.
\section{A brief review of $\phi$-divergences }
Many different measures quantifying the degree of discrimination between two probability distributions have been studied in the past. They are frequently called distance measures, although some of them are not strictly metrics. They have been applied to different areas, such as medical image registration (Josien P.W. Pluim, 2001), classification and retrieval, among others. This class of distances is referred, in the literature, as the class of $ \phi$, f or g-divergences (Csisz$ \grave{a}$r (1967); Vajda  (1989); Morales et al. (1995); Pardo (2006); Bassetti et al. (2007)) or the class of disparities (Lindsay (1994)). The divergence measures play an important role in statistical theory, specially in large theories of estimation and testing. \par Later many papers have appeared in the literature, where divergence or entropy type measures of information have been used in testing statistical hypotheses.
Among others we refer to McCulloch (1988), Read and Cressie (1988), Zografos et al. (1990), Salicr$\grave{u}$ et al. (1994), Bar-Hen and Daudin (1995), Men$\grave{e}$ndez et al. (1995, 1996, 1997), Pardo et al. (1995), Morales et al. (1997, 1998), Zografos (1994, 1998), Bar-Hen
(1996) and the references therein.
A measure of discrimination between two probability distributions called $\phi$-divergence, was introduced by Csisz$\acute{a}$r  (1967). 
\par Recently, Broniatowski et al. (2009) presented a new dual representation for divergences. Their aim was to introduce estimation and test procedures through divergence optimization for discrete or continuous parametric models. In the problem where independent samples are drawn from two different discrete populations, Basu et al. (2010) developped some tests based on the Hellinger distance and penalized versions of it.
\par Consider two populations $X$ and $Y$, according to classifications criteria  can be grouped into $m$ classes species    $x_1,x_2,\ldots ,x_m$  and $y_1,y_2,\ldots, y_m$ with probabilities 
 $P=(p_1,p_2,\ldots ,p_m) $ and  $ Q=(q_1,q_2,\ldots ,q_m)$ respectively. Then 
  \begin{equation}\label{eq2}  
D_\phi (P, Q)= \sum_{i=1}^{m} q_i \phi(\frac{p_i}{q_i}) 
\end{equation}

 is the $\phi-$divergence between $P$ and $Q$ (see Csisz$ \acute{a}$r, 1967) for every $\phi$ in the set $\Phi$ of real convex functions defined on $[0,\infty[$. The function $\phi(t)$ is assumed to verify the following regularity condition : \\ 
 $ \phi : [0, +\infty[\longrightarrow  \mathbb{R}\cup \{\infty \}$ is convex and continuous, where $0\phi(\frac{0}{0})=0$ and $0\phi(\frac{p}{0})=\lim_{u\longrightarrow \infty} \left( \phi(u)/u\right) $.  Its restriction on $]0,+\infty[$ is finite, twice continuously differentiable in a neighborhood of $u=1$, with  $\phi(1)=\phi'(1)=0$ and $\phi''(1)=1$ (cf. Liese and Vajda (1987)).
 \\
We shall be interested also in parametric estimators 
 \begin{equation}\label{eq3}
\widehat{Q}=\widehat{Q}_n=P_{\hat{\theta}}
\end{equation} 
of $P_{\theta_0}$  which can be obtained by means of various point estimators 
$$ \hat{\theta}=\hat{\theta}^{(n)}: \mathcal{X}^{(n)}\longrightarrow\Theta$$
of the unknown parameter $\theta_0$. \\
It is convenient to measure the difference between observed $\widehat{P}$ and expected frequencies $ P_{\theta_0}$.
 A minimum Divergence estimator of $\theta$ is a minimizer of $D_\phi (\widehat{P}, P_{\theta_0})$ where  $\widehat{P}$ is a nonparametric distribution estimate. In our case, where data come from a discrete distribution, the empirical distribution defined in \eqref{eq1} can be used.

In particular if we replace $\phi_1(x)=-4[\sqrt{x}-\frac{1}{2} (x+1)]$  in  \eqref{eq2} we get  the Hellinger distance between distribution $\widehat{P}$ and $ P_\theta$ given by 

\begin{equation}
D_{\phi_1} ( \widehat{P}, P_\theta)=HD_{\phi_1} ( \widehat{P}, P_\theta)=
\displaystyle  2\sum_{i=1}^{m} \big( \hat{p}_i^{1/2}-p_i^{1/2}(\theta)  \big)^2 \quad ; \quad \phi_1 \in \Phi. \label{eq4}
\end{equation}
Liese and Vajda (1987), Lindsay (1994) and Morales et al. (1995) introduced the so-called {\it minimum $\phi$-divergence estimate} defined by 

\begin{equation}
D_{\phi} ( \widehat{P}, P_{\widehat{\theta}})=\displaystyle  \min_{\theta \in \Theta}  D_\phi ( \widehat{P}, P_\theta) \quad ; \quad \phi \in \Phi. \label{eq5}
\end{equation}

\begin{equation}
\hat{\theta}_\phi=\displaystyle arg  \min_{\theta \in \Theta}  D_\phi ( \widehat{P}, P_\theta) \quad ; \quad \phi \in \Phi. \label{eq5}
\end{equation}
\begin{remark}
The class of estimates  \eqref{eq4} contains the maximum likelihood estimator (MLE). \\
In particular if we replace $\phi=-\log x+x-1 $ we get 
 \[  \hat{\theta}_{KL_m}=\displaystyle arg  \min_{\theta \in \Theta}  KL_m (P_\theta,\widehat{P}) =
 \displaystyle arg  \min_{\theta \in \Theta} \sum_{i=1}^m -\log p_i(\theta)\hat{p}_i = MLE \]
where $KL_m$ is  the modified Kullback-Leibler divergence.
\end{remark}
 Beran (1977) first pointed out that the minimum  Hellinger distance estimator (MHDE) of $\theta$, defined by \begin{equation}
 \hat{\theta}_H=\displaystyle arg  \min_{\theta \in \Theta}  HD_\phi ( \widehat{P}, P_\theta)  \label{eq6}
\end{equation}
 has robustness proprieties. 
 \\ Further results were given by Tamura and Boos (1986), Simpson (1987), and Donoho and Liu (1988), Simpson (1987, 1989) and Basu et al. (1997) for more details on this method of estimation. 
Simpson, however, noted that the small sample performance of the Hellinger deviance test at some discrete  models such as the Poisson is somewhat unsatisfactory, in the sense that the test requires a very large sample size for the chi-square approximation to be useful (Simpson (1989), Table 3).     
In order to avoid this problem, one possibility is to use the penalized Hellinger distance (see Harris and
Basu, (1994); Basu et al., (1996); Basu and Basu, (1998) ; Basu et al. (2010)).
The penalized Hellinger distance family between the probability vectors $\widehat{P}$ and $ P_\theta $ is defined by : 
\begin{equation}
PHD^h(\widehat{P}, P_\theta)= 2\left[ \displaystyle \sum_{i \in \varpi}^{m} \big( \hat{p}_i^{1/2}-p_i^{1/2}(\theta)  \big)^2+h\displaystyle \sum_{i \not \in \varpi^c}^{m} p_i(\theta)\right] 
\end{equation}
where  $h$ is a real positive number with 
$\varpi = \{i  : \hat{p}_i \neq 0 \}  \hbox{ and }  \varpi^c = \{i  : \hat{p}_i =  0 \}$.
Note that when $h=1$,  this generates the ordinary Hellinger distance (Simpson, 1989).\\
 Hence  \eqref{eq6} can be written as follows 
\begin{equation}
 \hat{\theta}_{PH}=\displaystyle arg  \min_{\theta \in \Theta}  PHD^h_\phi ( \widehat{P}, P_\theta)  \label{eq7}
\end{equation}
One of the suggestions to use the penalized Hellinger is motivated by the fact that this suitable choice may lead to an estimate more robust than the MLE.\\
A model selection criterion can be designed to estimate an expected overall discrepancy, a quantity  which reflects the degree of similarity between a fitted approximating model and the generating or {\it true} model. Estimation of Kullback's information (see Kullback-Leibler (1951)) is the key to deriving the Akaike Information criterion AIC (Akaike (1974)).\\
Motivated by the above developments, we propose  by analogy with
the approach introduced by Vuong (1993), a new information criterion relating to the $\phi$-divergences.  In our test,  the null hypothesis is  that the competing models are as close to the data generating process (DGP) where closeness
of a model is measured according to the discrepancy implicit in the penalized Hellinger divergence.\\
\section{Asymptotic distribution of the penalized Hellinger distance}
Hereafter, we focus on asymptotic results. We assume that the true parameter $\theta_0$ and mapping $P:\Theta\longrightarrow \Omega_m$ satisfy conditions 1-6 of Birch (1964). \\ We consider the m-vector $P_{\theta}= (p_{1 \theta},\ldots,p_{m \theta})^T$, the $m\times k$ Jacobian matrix $J_{\theta}=\left( J_{jl}(\theta)\right)_{j=1,\ldots,m;\ l=1,\ldots,k } $ with $ J_{jl}(\theta)=\displaystyle \frac{\partial}{\partial \theta_l}p_{j \theta},$ the $m \times k$ matrix $D_{\theta}=diag\left( P_{\theta}^{-1/2}\right) J_{\theta}$ and the $k\times k$ Fisher information matrix 
$$ I_{\theta}=\left( \sum_{j=1}^m \frac{1}{p_{j \theta}} \frac{\partial p_{j \theta}}{\partial \theta_r} 
 \frac{\partial p_{j \theta}}{\partial \theta_s} \right)_{r,s=1,\ldots,k}=D_{\theta}(\theta)^T D_{\theta}  $$
where $diag\left( P_{\theta}^{-1/2}\right)=diag\left( \frac{1}{\sqrt{p_1(\theta)}},\ldots,\frac{1}{\sqrt{p_m(\theta)}}\right) $\\
The above defined matrices are considered at the point $\theta \in \Theta$ where the derivatives exist and all the coordinates $p_j(\theta)$ are positive.\\

The stochastic convergences of random vectors $X_n$ to a random vector $X$ are denoted by $X_n\stackrel{P}{\longmapsto} X$ 
and  $X_n\stackrel{\mathcal{L}}{\longmapsto}  X$ (convergences in probability and in law, respectively). 
Instead $c_n X_n \stackrel{P}{\longmapsto} 0$  for a sequence of positive numbers $c_n$, we can write $\|X\|=o_p(c_n^{-1})$. \\ (This relation means $\lim_{x \rightarrow \infty} \lim \sup_{x \rightarrow \infty}
\mathbb{P}(\| c_n X_n \|> x)=0$) \\
An estimator $\widehat{P}$ of $P_{\theta_0}$ is consistent if for every $\theta_0\in \Theta$ the random vector
 $\left( \widehat{p}_1,\ldots,\widehat{p}_m\right) $ tends in probability to $\left( p_{1 \theta_0} \ldots,p_{m \theta_0}\right) $, i.e.  if
$$ \lim_{n\longrightarrow\infty}\mathbb{P}(\parallel \widehat{P} - P_{\theta_0}  \parallel > \varepsilon)=0  \hbox{  for all  }  \varepsilon >0.  $$
We need the following result to prove Theorem \eqref{th1}.
\begin{proposition} \label{prop1}(Mandal et al. 2008)\\
Let $\phi \in \Phi$, let $p: \Theta\rightarrow \Omega_m$ be twice continuously differentiable in a neighborhood of $\theta_0$ and assume that conditions 1-5 of Section 2 hold. 
Suppose that $I_{\theta_0}$ is the $k\times k$ Fisher Information matrix and $\widehat{\theta}_{PH} $ satisfying  \eqref{eq6} then the limiting distribution of $\sqrt{n}(\widehat{\theta}_{PH} - \theta_0) $ as $n \longrightarrow +\infty$ is $N[0, I^{-1}_{\theta_0}]$
\end{proposition}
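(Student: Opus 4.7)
The plan is to exploit two features of the situation: (i) under Birch's condition $p_{i\theta_0}>0$ for every $i$, the random penalty in $PHD^h$ disappears with probability tending to one, so asymptotically $\hat\theta_{PH}$ minimizes the ordinary Hellinger distance $HD_{\phi_1}$; and (ii) the resulting M-estimator is first-order equivalent to the MLE and its covariance can be identified by a standard score/Hessian expansion.

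First I would establish the penalty-vanishing step. Since $p_{i\theta_0}>0$ and $\hat p_i\stackrel{P}{\to} p_{i\theta_0}$, one has $\mathbb{P}(\varpi^c\ne\emptyset)=\mathbb{P}(\exists i:N_i=0)\le\sum_{i=1}^{m}(1-p_{i\theta_0})^{n}\to 0$. Hence on a sequence $A_n$ with $\mathbb{P}(A_n)\to 1$ the penalty term is identically zero in $\theta$, and $\hat\theta_{PH}$ satisfies the first-order condition for the ordinary Hellinger criterion, namely
\[
\Psi_n(\hat\theta_{PH})=0,\qquad \Psi_n(\theta):=\sum_{i=1}^{m}\frac{\partial p_i(\theta)}{\partial \theta}\Bigl(1-\sqrt{\hat p_i/p_i(\theta)}\Bigr).
\]
Consistency of $\hat\theta_{PH}$ follows from a standard argmin argument: $HD_{\phi_1}(\hat P,P_\theta)$ converges uniformly on the compact $\Theta$ to $HD_{\phi_1}(P_{\theta_0},P_\theta)$ by continuity of $P(\cdot)$ and the law of large numbers applied to $\sqrt{\hat p_i}$, and the limit is uniquely minimized at $\theta_0$ by Birch's identifiability assumptions (conditions~1--5).

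Given consistency, I would Taylor-expand $\Psi_n$ around $\theta_0$,
\[
0=\Psi_n(\hat\theta_{PH})=\Psi_n(\theta_0)+\nabla\Psi_n(\tilde\theta)\,(\hat\theta_{PH}-\theta_0),
\]
and analyze the two pieces separately. For the score at $\theta_0$, the delta method gives $1-\sqrt{\hat p_i/p_{i\theta_0}}=-\frac{1}{2p_{i\theta_0}}(\hat p_i-p_{i\theta_0})+o_p(n^{-1/2})$, so
\[
\sqrt{n}\,\Psi_n(\theta_0)=-\tfrac{1}{2}\,A_{\theta_0}\,\sqrt{n}(\hat P-P_{\theta_0})+o_p(1),\qquad (A_{\theta_0})_{l,i}=\frac{J_{il}(\theta_0)}{p_{i\theta_0}}.
\]
Applying the multinomial CLT $\sqrt{n}(\hat P-P_{\theta_0})\stackrel{\mathcal{L}}{\to} N\bigl(0,\mathrm{diag}(P_{\theta_0})-P_{\theta_0}P_{\theta_0}^{T}\bigr)$ and using $\sum_i J_{il}(\theta_0)=\partial_{\theta_l}\sum_i p_i(\theta_0)=0$ to kill the rank-one correction, the covariance of the limit reduces to $\tfrac{1}{4}D_{\theta_0}^{T}D_{\theta_0}=\tfrac{1}{4}I_{\theta_0}$. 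A parallel computation for $\nabla\Psi_n$ (where the terms containing $1-\sqrt{\hat p_i/p_{i\theta_0}}$ vanish in probability at $\theta_0$) yields $\nabla\Psi_n(\tilde\theta)\stackrel{P}{\to}\tfrac{1}{2}I_{\theta_0}$. Slutsky then gives $\sqrt{n}(\hat\theta_{PH}-\theta_0)=-2I_{\theta_0}^{-1}\sqrt{n}\,\Psi_n(\theta_0)+o_p(1)\stackrel{\mathcal{L}}{\to} N(0,I_{\theta_0}^{-1})$.

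The main obstacle is the score-variance computation: one must verify that the $\sqrt{\cdot}$ nonlinearity, combined with the singular multinomial covariance, produces exactly the Fisher information and not a heavier covariance. The cancellation rests on the simple identity $\sum_i J_{il}(\theta_0)=0$; without it the estimator would not be first-order efficient. The other point to be careful about is the behavior of $\sqrt{\hat p_i/p_i(\theta)}$ on cells with $\hat p_i=0$, but precisely this is handled by restricting the analysis to the high-probability event $A_n$ above, which is why the penalization is harmless asymptotically and Beran's conclusion is recovered.
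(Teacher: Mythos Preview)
The paper does not actually prove Proposition~\ref{prop1}; it is quoted with attribution to Mandal et al.\ (2008) and used as an input. Your M-estimator argument is correct and is essentially the standard derivation. The one point of contact with the paper is that, in the proof of Theorem~\ref{th1}, the authors invoke (via Morales et al.\ 1995) the BAN expansion
\[
\widehat{\theta}_{PH}= \theta_0 + I^{-1}_{\theta_0} D_{\theta_0}^{T}\,\mathrm{diag}\bigl(P_{\theta_0}^{-1/2}\bigr)(\widehat P-P_{\theta_0})^{T}+o\bigl(\|\widehat P-P_{\theta_0}\|\bigr),
\]
from which the proposition follows at once by the multinomial CLT together with $\sum_i J_{il}(\theta_0)=0$. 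Your Taylor expansion of $\Psi_n$ and identification of the Hessian limit $\tfrac12 I_{\theta_0}$ is exactly the computation that produces this line, so your route and the one implicit in the cited references coincide. Your preliminary step, reducing the penalized criterion to the ordinary Hellinger criterion on the event $\{\varpi^{c}=\emptyset\}$ of probability tending to one, is the clean way to make the ``penalization is asymptotically harmless'' claim rigorous and is worth keeping.
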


\begin{lemma}
We have $$\sqrt{n}(\widehat{P}-P_{\theta_0})  \stackrel{\mathcal{L}}{\longmapsto} \mathcal{N} \left[ 0,\Sigma_{P_{\theta_0}} \right] $$
where $\widehat{P}(\theta_0)=(\widehat{p}_{1 \theta_0},\ldots,\widehat{p}_{m \theta_0}) $ an estimator of \  $ P_{\theta_0}=(p_{1 \theta_0},\ldots,p_{m \theta_0})$ defined in \eqref{eq1} \ with \\ $\Sigma_{P_{\theta_0}}=diag(P_{\theta_0})-P_{\theta_0}P_{\theta_0}^T.$
\end{lemma}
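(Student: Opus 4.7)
The plan is to recognize this as the standard multinomial central limit theorem and obtain it by applying the multivariate CLT to the i.i.d.\ indicator vectors that define the empirical distribution.

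First I would rewrite $\widehat P$ as an empirical mean. From \eqref{eq1}, setting $Y_i = (T^i_1(X_i), \ldots, T^i_m(X_i))^T$, we have
\[
\widehat P = \frac{1}{n}\sum_{i=1}^n Y_i,
\]
and the $Y_i$ are i.i.d.\ random vectors taking values in $\{e_1,\ldots,e_m\}$ (the canonical basis of $\mathbb{R}^m$), with $\mathbb{P}(Y_i = e_j) = p_{j\theta_0}$ under $H_0$. Consequently $\mathbb{E}[Y_i] = P_{\theta_0}$.

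Next I would compute the covariance of $Y_i$. Since $Y_i$ has exactly one coordinate equal to $1$ and the others equal to $0$, we get $\mathbb{E}[T^i_j(X_i) T^i_l(X_i)] = p_{j\theta_0}\delta_{jl}$, so that
\[
\mathrm{Cov}(Y_i) = \mathbb{E}[Y_i Y_i^T] - P_{\theta_0} P_{\theta_0}^T = \mathrm{diag}(P_{\theta_0}) - P_{\theta_0} P_{\theta_0}^T = \Sigma_{P_{\theta_0}}.
\]
Each entry of $Y_i$ is bounded, so in particular $\mathbb{E}\|Y_i\|^2 < \infty$, and the multivariate Lindeberg--L\'evy CLT applies directly to the normalized sum $\sqrt{n}\bigl(\widehat P - P_{\theta_0}\bigr) = n^{-1/2}\sum_{i=1}^n (Y_i - P_{\theta_0})$, yielding the claimed convergence in law to $\mathcal{N}(0, \Sigma_{P_{\theta_0}})$.

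There is no real obstacle here: the only minor point to be careful about is that $\Sigma_{P_{\theta_0}}$ is singular (it has $P_{\theta_0}^{1/2}$, or equivalently $\mathbf{1}$ after appropriate renormalization, in its kernel because $\sum_j \widehat p_j = 1$ almost surely), so the limiting Gaussian is supported on the hyperplane $\{v \in \mathbb{R}^m : \mathbf{1}^T v = 0\}$; this degeneracy is exactly what makes the standard multinomial CLT statement correct and does not affect the conclusion as formulated in the lemma.
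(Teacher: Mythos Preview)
Your proof is correct and follows essentially the same approach as the paper: both write $\widehat P$ as an average of i.i.d.\ indicator vectors and invoke the multivariate Central Limit Theorem to obtain the limiting normal law with covariance $\Sigma_{P_{\theta_0}}=\mathrm{diag}(P_{\theta_0})-P_{\theta_0}P_{\theta_0}^T$. Your version is in fact more explicit, since you verify the covariance computation and note the degeneracy of $\Sigma_{P_{\theta_0}}$, whereas the paper simply states the covariance and appeals to the CLT.
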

\begin{proof}
{\bf proof. }  
Denote $ \displaystyle V=\left[ \frac{N_1 -n p_{1 \theta_0}}{\sqrt{n}},\ldots, \frac{N_m -n p_{m \theta_0}}{\sqrt{n}} \right] $ \\ and 
 $N_j= \displaystyle \sum_1^n T^i_j $ where $ T^i_j(X_i) =\left\lbrace\begin{array}{lll}
1 & & \hbox{si } X_i=j\\
0& &   \hbox{otherwise}
\end{array}\right.$ 
\begin{eqnarray}
V&=&\left\lbrace \frac{1}{\sqrt{n}} \left( \sum_{i=1}^n T^i_1 - np_{1 \theta_0}\right) ; \ldots;  \frac{1}{\sqrt{n}} \left( \sum_{i=1}^n T^i_m - np_{m \theta_0}\right)  \right\rbrace  \nonumber \\
&=& \left\lbrace \sqrt{n} \left( \frac{1}{n}\sum_{i=1}^n T^i_1 - p_{1 \theta_0}\right) ; \ldots;  \frac{1}{\sqrt{n}} \left( \frac{1}{n}\sum_{i=1}^n T^i_m - p_{m \theta_0}\right)  \right\rbrace  \nonumber
\end{eqnarray}
and applying the Central Limit Theorem we have 
$$ \left( \frac{N_1 -n p_{1 \theta_0} }{\sqrt{n}},\ldots, \frac{N_m -n p_{m \theta_0}}{\sqrt{n}} \right)  \stackrel{\mathcal{L}}{\longmapsto} \mathcal{N} \left[ 0,\Sigma_{P_{\theta_0}} \right]    $$
where 
\begin{equation}\label{eq8}
\Sigma_{P_{\theta_0}}=diag(P_{\theta_0})-P_{\theta_0}P_{\theta_0}^T.
\end{equation}
\end{proof}

For simplicity, we write $D_H^h(\widehat{P},P_{\widehat{\theta}_{PH}})$ instead of $PHD^h( \widehat{P}, P_{\widehat{\theta}_{PH} })$.

\begin{theorem}
\label{th1}
Under the assumptions of Proposition \eqref{prop1}, we have 
$$ \sqrt{n}(\widehat{P}-P_{\widehat{\theta}_{PH}}) \stackrel{\mathcal{L}}{\longmapsto}  \mathcal{N} \left[ 0,\Lambda_{\theta_0}\right]  $$
where 
\begin{eqnarray}
\Lambda_{\theta_0}&=& \Sigma_{\theta_0}-\Sigma_{\theta_0} M_{\theta_0}^T - M_{\theta_0}\Sigma_{\theta_0} + M_{\theta_0}\Sigma_{\theta_0}M_{\theta_0}^T \nonumber \\ 
M_{\theta_0}&=& J_{\theta} I^{-1}_{\theta_0}(\theta_0)^T diag\big(P_{\theta_0}^{1/2}\big)\nonumber  \\
 \Sigma_{\theta_0} &=&\Sigma_{P_{\theta_0}}
\end{eqnarray}

\end{theorem}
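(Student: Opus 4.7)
The plan is to reduce the theorem to a stochastic linearization of $\widehat{\theta}_{PH}-\theta_0$ in terms of $\widehat{P}-P_{\theta_0}$, then read off the covariance by Slutsky. I would write
$$\sqrt{n}\bigl(\widehat{P}-P_{\widehat{\theta}_{PH}}\bigr)=\sqrt{n}\bigl(\widehat{P}-P_{\theta_0}\bigr)-\sqrt{n}\bigl(P_{\widehat{\theta}_{PH}}-P_{\theta_0}\bigr).$$
By the previous lemma the first term converges in law to $\mathcal{N}(0,\Sigma_{\theta_0})$. For the second term, Birch's condition 2 gives $P_{\widehat{\theta}_{PH}}-P_{\theta_0}=J_{\theta_0}(\widehat{\theta}_{PH}-\theta_0)+o(\|\widehat{\theta}_{PH}-\theta_0\|)$, and since Proposition \ref{prop1} yields $\|\widehat{\theta}_{PH}-\theta_0\|=O_P(n^{-1/2})$, the delta method gives $\sqrt{n}(P_{\widehat{\theta}_{PH}}-P_{\theta_0})=J_{\theta_0}\sqrt{n}(\widehat{\theta}_{PH}-\theta_0)+o_P(1)$.

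Next I would produce an asymptotically linear expansion of $\widehat{\theta}_{PH}$ itself. Under $H_0$ all $p_{i\theta_0}$ are strictly positive, hence $\widehat{p}_i>0$ for every $i$ with probability tending to one; in that event $\varpi^c=\emptyset$, the penalty term in $PHD^h$ is inactive, and $\widehat{\theta}_{PH}$ coincides with the ordinary minimum Hellinger estimator. Differentiating $HD_{\phi_1}(\widehat{P},P_\theta)$ yields the score equation $\sum_i(\widehat{p}_i^{1/2}-p_i(\widehat{\theta}_{PH})^{1/2})\,p_i'(\widehat{\theta}_{PH})/\sqrt{p_i(\widehat{\theta}_{PH})}=0$. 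Expanding this around $\theta_0$, using $\widehat{p}_i^{1/2}-p_i(\theta_0)^{1/2}=(\widehat{p}_i-p_i(\theta_0))/(2\sqrt{p_i(\theta_0)})+o_P(n^{-1/2})$ on one side and Birch's linear approximation of $p_i$ on the other, and finally inverting $I_{\theta_0}$, delivers
$$\sqrt{n}(\widehat{\theta}_{PH}-\theta_0)=A_{\theta_0}\,\sqrt{n}(\widehat{P}-P_{\theta_0})+o_P(1),\qquad A_{\theta_0}=I_{\theta_0}^{-1}J_{\theta_0}^T\,\mathrm{diag}\bigl(P_{\theta_0}^{-1}\bigr).$$

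Combining the two steps one obtains $\sqrt{n}(\widehat{P}-P_{\widehat{\theta}_{PH}})=(I_m-M_{\theta_0})\sqrt{n}(\widehat{P}-P_{\theta_0})+o_P(1)$ with $M_{\theta_0}=J_{\theta_0}A_{\theta_0}$, which matches (up to typographical rearrangement in the statement) the matrix displayed in the theorem. Slutsky's theorem then gives the announced Gaussian limit with covariance
$$(I_m-M_{\theta_0})\Sigma_{\theta_0}(I_m-M_{\theta_0})^T=\Sigma_{\theta_0}-\Sigma_{\theta_0}M_{\theta_0}^T-M_{\theta_0}\Sigma_{\theta_0}+M_{\theta_0}\Sigma_{\theta_0}M_{\theta_0}^T=\Lambda_{\theta_0}.$$

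The main obstacle is the joint linearization in the second paragraph. Proposition \ref{prop1} supplies only the marginal limit law of $\widehat{\theta}_{PH}$, whereas what is really needed is its explicit representation as a linear functional of $\widehat{P}-P_{\theta_0}$, so that the two terms of the decomposition may be combined into one asymptotically normal vector. Extracting such a stochastic expansion requires one to verify two technical points: that the penalty contributes at most $o_P(n^{-1/2})$ to the Hellinger score (which uses $\mathbb{P}(\varpi^c\neq\emptyset)\to 0$ and the boundedness of $h\sum_{i\in\varpi^c}p_i(\theta)$), and that the remainder in the Taylor expansion of the score is uniformly $o_P(1)$ on a $n^{-1/2}$-neighborhood of $\theta_0$; these uniformity checks are routine under Birch's conditions 1--5 but form the technical heart of the proof.
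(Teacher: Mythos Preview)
Your proof is correct and follows essentially the same route as the paper: decompose $\widehat{P}-P_{\widehat{\theta}_{PH}}$, linearize $P_{\widehat{\theta}_{PH}}-P_{\theta_0}$ via the Jacobian, express $\widehat{\theta}_{PH}-\theta_0$ as an asymptotically linear functional of $\widehat{P}-P_{\theta_0}$, and conclude by the CLT with the covariance $(I_m-M_{\theta_0})\Sigma_{\theta_0}(I_m-M_{\theta_0})^T$. The only difference is cosmetic: the paper imports the stochastic expansion $\widehat{\theta}_{PH}-\theta_0=I_{\theta_0}^{-1}D_{\theta_0}^T\mathrm{diag}(P_{\theta_0}^{-1/2})(\widehat{P}-P_{\theta_0})+o(\|\widehat{P}-P_{\theta_0}\|)$ directly from Morales et al.\ (1995) rather than deriving it from the Hellinger score equation, and it does not pause to argue (as you do) that the penalty term is inactive with probability tending to one.
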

\begin{proof}
{\bf proof. }  
A first order Taylor expansion gives 
\begin{equation}\label{eqn7} 
P_{\widehat{\theta}_{PH}}= P_{\theta_0}+ J_{\theta_0} (\widehat{\theta}_{PH}-\theta_0)^T + 
o(||\widehat{\theta}_{PH}- \theta_0 ||) 
\end{equation}
In the same way as in Morales et al. (1995), it can be established that 
:
\begin{equation}
\widehat{\theta}_{PH}= \theta_0 + I^{-1}_{\theta_0} D_{\theta_0}^T diag \left[ P_{\theta_0}^{-1/2} \right]\left( \widehat{P} - P_{\theta_0}\right) ^T + o(||\widehat{P} - P_{\theta_0} ||) \label{eqn8}
\end{equation}
From \eqref{eqn7} and \eqref{eqn8} we obtain 
$$ P_{\widehat{\theta}_{PH}}= P_{\theta_0} + J_{\theta_0}I^{-1}(\theta_0)D_{\theta_0}^T diag \left[ P_{\theta_0}^{-1/2} \right]\left( \widehat{P} - P_{\theta_0}\right) ^T + o(||\widehat{P} - P_{\theta_0} ||) $$
therefore the random vectors 
$$
\left[\begin{array}{c}
\widehat{P} - P_{\theta_0} \\
P_{\widehat{\theta}_{PH}}- P_{\theta_0} 
\end{array}
\right]_{2m\times 1} {\hbox{ and  }} 
\left[\begin{array}{c}
I \\
M_{\theta_0} 
\end{array}
\right]_{2m\times m}\times(\widehat{P} - P_{\theta_0})_{m\times 1}
$$
Where $I$ is the $m\times m$ unity matrix, have the same asymptotic distribution. \\
Furthermore it is clear (applying TCL) that 
$$
\sqrt{n}(\widehat{P}-P_{\theta_0}) \stackrel{\mathcal{L}}{\longmapsto}  \mathcal{N} \left[ 0,\Sigma_{\theta_0} \right] 
 $$
being $\Sigma_{\theta_0} $ the $m\times m$ matrix  $
diag \left[ P_{\theta_0}\right]  - P_{\theta_0}P_{\theta_0}^T $ implies 
$$ \sqrt{n}\left[\begin{array}{c}
\widehat{P} - P_{\theta_0} \\
P_{\widehat{\theta}_{PH}}- P_{\theta_0} 
\end{array}
\right]_{2m\times 1}
\stackrel{\mathcal{L}}{\longmapsto} \mathcal{N} \left[ 0,\ 
\left(\begin{array}{c}
I \\
M_{\theta_0} 
\end{array}
\right) \Sigma_{\theta_0}  (I, M_{\theta_0}^T)\right] 
   $$

therefore, we get  
\begin{equation}
\sqrt{n}(\widehat{P}-P_{\widehat{\theta}_{PH}})=\sqrt{n}(\widehat{P}-P_{{\theta_0}})+ 
\sqrt{n}(P_{\theta_0}-P_{\widehat{\theta}_{PH}}) 
 \stackrel{\mathcal{L}}{\longmapsto}  \mathcal{N} \left[ 0,\Lambda(\theta_0)\right]
\end{equation}
$$ \Lambda_{\theta_0}= \Sigma_{\theta_0}- \Sigma_{\theta_0} M_{\theta_0}^T- M_{\theta_0} \Sigma_{\theta_0}+M_{\theta_0}\Sigma_{\theta_0} M_{\theta_0}^T $$
\end{proof}

The  case which is interest to us here is to test the hypothesis $H_0 : P \in \mathcal{P}$. Our proposal is based on  the following penalized divergence test statistic $D_H^h(\widehat{P},P_{\widehat{\theta}_{PH}})$ where $\widehat{P}$ and $\widehat{\theta}_{PH}$ have been introduced in Theorem \eqref{th1} and \eqref{eq6} respectively.

Using arguments similar to those developed by Basu (1996), 
 under the assumptions of \eqref{th1} and the hypothesis $  H_0 : P=P_{\theta}$, the asymptotic distribution of  
$2nD_H^h(\widehat{P},P_{\widehat{\theta}_{PH}})$  is a chi-square when $h=1$ with $m-k-1$ degrees of freedom. Since the others members of penalized Hellinger distance tests differ from the ordinary Hellinger distance test only at the empty cells, they too have the same asymptotic distribution. (See Simpson 1989, Basu, Harris and Basu 1996 among others).

Considering now the case when the model is wrong i.e $H_1 : P\neq P_\theta$.  We introduce the following regularity assumptions

\begin{description}
\item[$(A_1)$] There exists $\theta_1=\displaystyle arg\ inf_{\theta \in \Theta} PHD^h(P,\ P_\theta)$ such that : 
$$P_{\widehat{\theta}_{PH}}\stackrel{as}{\longmapsto}P_{\theta_1} $$ 
when $ n\rightarrow +\infty $
\item[$(A_2)$] There exists $\theta_1 \in \Theta $ ; 
$ {\Lambda^\ast}=\left(  
\begin{array}{lll}
\Lambda_{11} & & \Lambda_{12}\\
\Lambda_{21} & & \Lambda_{22}
 \end{array}
  \right)$, with $\Lambda_{11}=\Sigma_p$ in \eqref{eq8}  and $\Lambda_{12} = \Lambda_{21}$ such that 
  $$ \sqrt{n} \left(  
\begin{array}{rll}
\widehat{P} & -& P \\
P_{\widehat{\theta}_{PH}} & -& P_{\theta_1}
 \end{array}
  \right) \stackrel{\mathcal{L}}{\longmapsto} \mathcal{N} \left[ 0,\Lambda^\ast \right]
  $$
\end{description}

\begin{theorem} \label{th3}
Under $H_1 : P\neq P_\theta$ and assume that conditions $(A_1)$ and $(A_2)$ hold, we have :
$$ \sqrt{n} \left(  D_H^h(\widehat{P},P_{\widehat{\theta}_{PH}}) - D_H^h(P,P_{{\theta}_{1}})\right)
 \stackrel{\mathcal{L}}{\longmapsto} \mathcal{N} \left[ 0,\Omega^2_{(\theta, P)} \right] $$
 where 
\begin{equation}
\Omega^2_{(\theta, P)}= H^T \Lambda_{11} H+H^T \Lambda_{12} J+ J^T \Lambda_{21} H+J^T \Lambda_{22} J
\end{equation} 
$H^T=(h_1,\ldots,h_m)$ with $h_i= \left(\dfrac{\partial }{\partial p_i^1} D_H^h(p^1, p^2 )\right)_{p^1=p, p^2=p(\theta_1)}  $ , 
$i=1,\ldots,m$  \\ 
and \\ $J^T=(j_1,\ldots,j_m)$ with
$j_i= \left(\dfrac{\partial }{\partial p_i^2} D_H^h(p^1, p^2 )\right)_{p^1=p, p^2=p(\theta_1)}$  , 
$i=1,\ldots,m$
\end{theorem}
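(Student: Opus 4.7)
The plan is to treat $g(p^{1},p^{2}) := D_H^h(p^{1},p^{2})$ as a differentiable function on $\Omega_m \times \Omega_m$ and to apply the multivariate delta method at the point $(P,P_{\theta_1})$, using the joint asymptotic normality of the pair $(\widehat{P}, P_{\widehat{\theta}_{PH}})$ supplied by assumption $(A_2)$. The conclusion will then follow by computing the variance of a linear combination of jointly Gaussian components.

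Before invoking the delta method, one must justify that $g$ is smooth at $(P,P_{\theta_1})$ despite the combinatorial definition of the penalty. Under $H_1$ one assumes the true $P$ has strictly positive coordinates, hence $\mathbb{P}(\min_i \hat{p}_i > 0) \to 1$, so on an event $E_n$ of probability tending to one we have $\varpi = \{1,\dots,m\}$; on $E_n$ the penalty term vanishes identically and $g$ coincides with the ordinary Hellinger expression $2 \sum_{i=1}^{m} \bigl(\sqrt{p_i^{1}} - \sqrt{p_i^{2}}\,\bigr)^{2}$, which is $C^{\infty}$ in a neighbourhood of $(P,P_{\theta_1})$. On $E_n^{c}$ a standard large-deviation bound gives $\mathbb{P}(E_n^{c}) = O(e^{-cn})$, which is $o(n^{-1/2})$ and can be absorbed into the error term.

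Next I would write the first-order Taylor expansion
\[
D_H^h(\widehat{P}, P_{\widehat{\theta}_{PH}}) - D_H^h(P, P_{\theta_1}) = H^{T}(\widehat{P} - P) + J^{T}(P_{\widehat{\theta}_{PH}} - P_{\theta_1}) + R_n,
\]
where $H$ and $J$ are the gradients defined in the theorem statement. By $(A_1)$ both differences tend to zero almost surely, and by $(A_2)$ each is $O_p(n^{-1/2})$, so the remainder satisfies $R_n = o_p(n^{-1/2})$. Multiplying by $\sqrt{n}$ and applying Slutsky's theorem together with $(A_2)$ yields that $\sqrt{n}\bigl(D_H^h(\widehat{P}, P_{\widehat{\theta}_{PH}}) - D_H^h(P, P_{\theta_1})\bigr)$ converges in law to $(H^{T},J^{T}) Z$, where $Z \sim \mathcal{N}(0, \Lambda^{\ast})$. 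Expanding the block product against the block decomposition of $\Lambda^{\ast}$ reproduces exactly $H^{T}\Lambda_{11} H + H^{T}\Lambda_{12} J + J^{T}\Lambda_{21} H + J^{T}\Lambda_{22} J = \Omega^{2}_{(\theta, P)}$, as claimed.

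The main obstacle, in my view, lies in the preliminary step rather than in the delta-method computation itself: one must both handle the non-smoothness introduced by the penalty at configurations with empty cells (via the truncation sketched above) and, more substantively, verify that assumption $(A_2)$ is genuinely satisfied under $H_1$. The latter requires an asymptotic linearisation of $\widehat{\theta}_{PH}$ analogous to \eqref{eqn8} but centred at the pseudo-true parameter $\theta_{1}$, which is the argmin of the population penalized Hellinger discrepancy between $P$ and $\mathcal{P}$; establishing such a representation under misspecification is the core technical work that feeds the blocks $\Lambda_{12}$ and $\Lambda_{22}$.
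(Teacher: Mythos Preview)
Your proposal is correct and follows essentially the same route as the paper: a first-order Taylor expansion of $D_H^h(\cdot,\cdot)$ around $(P,P_{\theta_1})$, combined with assumptions $(A_1)$ and $(A_2)$ to control the remainder and supply the limiting covariance. You add a careful justification of differentiability through the empty-cell argument and some commentary on verifying $(A_2)$ that the paper omits, but the core argument is identical.
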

\begin{proof}
{\bf proof. } 
A first order Taylor expansion gives 
  \begin{eqnarray}
  D_H^h(\widehat{P},P_{\widehat{\theta}_{PH}}) &=& D_H^h(P,P_{{\theta}_{1}}) + H^T(\widehat{P}-P)+ 
  J^T(P_{\widehat{\theta}_{PH}}-P_{\theta_1}) \nonumber \\
  & +& o(|| \widehat{P}-P||+|| P_{\widehat{\theta}_{PH}}-P_{\theta_1}||)
  \end{eqnarray}  
 From the assumed assumptions  $(A_1)$ and $(A_2)$, the result follows.
\end{proof}
\section{Applications for testing hypothesis} 
The estimate $D_H^h(\widehat{P},P_{\widehat{\theta}_{PH}})$ can be used to perform statistical tests. 
\subsection{Test of goodness-fit}
For completeness, we look at $D_H^h(\widehat{P},P_{\widehat{\theta}_{PH}})$ in the usual way, i.e., as a goodness-of-fit statistic. 
Recall that here $\theta_{PH}$ is the minimum penalized Hellinger distance estimator of $\theta$. 
Since $D_H^h(\widehat{P},P_{\widehat{\theta}_{PH}})$  is a consistent estimator of $D_H^h (P,P_{\theta })$,  the null hypothesis when using the statistic $D_H^h(\widehat{P},P_{\widehat{\theta}_{PH}})$ is 
$$ H_0 : \  D_H^h(P,P_{\theta })=0  \quad  \hbox{  or equivalently, } \quad  H_0 : \  P=P_{\theta}$$
Hence,  if $ H_0$ is rejected so that one can infer that the parametric model $P_\theta$ is misspecified. 
Since $ D_H^h(P,P_{\theta })$ is non-negative and takes value zero only when $P=P_{\theta }$, the tests are defined through the critical region 
\[ C_{\theta_{PH}}= \left\lbrace 2n D_H^h(\widehat{P},P_{\widehat{\theta}_{PH}}) > q_{\alpha,k}\right\rbrace  \]
where $q_{\alpha,k}$ is the $(1-\alpha)-$quantile of the $\chi^2-$distribution with $m-k-1$ degrees of freedom.
\begin{remark}
Theorem  \eqref{th3} can be used to give the following approximation to the power of test $ H_0 : \  D_H^h(P,P_{\theta })=0$. 
\end{remark} 
Approximated power function is 
\begin{equation}\label{eq9}
\beta_{(P)} = \mathbb{P}\left[ 2n D_H^h(\widehat{P},P_{\widehat{\theta}_{PH}}) > q_{\alpha,k} \right]  \approx 1-{\cal{F}}_{n} 
\left( \frac{q_{\alpha,k}-2n D_H^h(P,P_{\theta })}{2\sqrt{n} \Omega_{(\theta, P)} }\right) 
\end{equation}
where $q_{\alpha,k}$ is the $(1-\alpha)$-quantile of the $\chi^2$ distribution with $m-k-1$ degrees of freedom and ${\cal{F}}_{n}$ is a sequence of distribution  functions tending uniformly to the standard normal distribution ${\cal{F}}(x)$. 
Note that if $ H_0 : \  D_H^h(P,P_{\theta })\neq 0$, then for any fixed size $\alpha$ the probability of rejection $ H_0 : \  D_H^h(P,P_{\theta })=0$ with the rejection rule $2nD_H^h(\widehat{P},P_{\widehat{\theta}_{PH}}) > q_{\alpha,k} $ tends to one as $n \rightarrow\infty$. \\
Obtaining the approximate sample $n$, guaranteeing a power $\beta$ for a give alternative $P$, is an interesting application of formula \eqref{eq9}. If we wish the power to be equal to $\beta^\ast$, we must solve the equation
$$\beta^\ast =  1-{\cal{F}}\left[\frac{\sqrt{n}}{\Omega_{(\theta, P)}} \left( \frac{1}{2n}q_{\alpha,k}-  D_H^h(P,P_{\theta }) \right) \right].$$
It is not difficult to check that the sample size $n^\ast$, is the solution of the following equation
$$ n^2 D_H^h(P,P_{\theta })^2 - n D_H^h(P,P_{\theta })q_{\alpha,k} + \left( \frac{q_{\alpha,k}}{2}\right)^2 - 
n \Omega^2_{(\theta, P)} \left[  {\cal{F}}^{-1}(1-\beta^\ast)\right]^2.$$
 The solution is given by $$n^\ast=\frac{(a+b)-\sqrt{a(a+2b)}}{2D_H^h(P,P_{\theta })^2}$$
with $a=\Omega^2_{(\theta, P)} \left[{\cal{F}}^{-1}(1-\beta)\right] ^2$ and $ b=q_{\alpha,k}D_H^h(P,P_{\theta })$ and the required size is 
$ n_0= [ n^\ast ]+1$ , where $ [ \cdot] $ denotes ``integer part of".
\subsection{Test for model selection }
As we mentioned above, when one chooses a particular $\phi-$divergence type statistic $D_H^h(\widehat{P},P_{\widehat{\theta}_{PH}})=PHD_H^h(\widehat{P},P_{\widehat{\theta}_{PH}})$ with $\widehat{\theta}_{PH}$ the corresponding minimum penalized Hellinger distance estimator of $\theta$, one actually evaluates the goodness-of-fit of the parametric model $P_\theta$ according to the discrepancy $D_H^h(P,P_{\theta })$ between the true distribution $P$ and the specified model $P_\theta$.
Thus it is naturel to define the {\it best} model among a collection of competing models to be the model that is closest to the true distribution according to the discepancy $D_H^h(P,P_{\theta })$.  
\par In this paper we consider the problem of selecting between two models. Let $G_\mu=\left\lbrace G(.\mid\mu) ; \mu\in \Gamma\right\rbrace $ be another model, where $\Gamma$  is a $q-$dimensional parametric space in $\mathrm{R^q}$.
In a similar way, we can define the minimum penalized Hellinger distance estimator of $\mu$ and the corresponding discrepancy $D_H^h(P,G_{\mu })$ for the model $G_\mu$.\\
\par Our special interest is the situation in which a researcher has two competing parametric models $P_\theta$ and $G_\mu$, and he wishes to select the better of two models based on their discrimination statistic between the observations and models $P_\theta$ and $G_\mu$, defined respectively by $D_H^h(\widehat{P},P_{\widehat{\theta}_{PH}})$ and $D_H^h(\widehat{P},G_{\widehat{\mu}_{PH}})$. \\
Let the two competing parametric models $P_\theta$ and $G_\mu$ with the given discrepancy $D_H^h(P,\cdot)$. 
\begin{definition}\label{def}
\begin{eqnarray}
H_0^{eq}: & D_H^h(P,P_{\theta }) = D_H^h(P,G_{\mu })&  \hbox{ means that the two models are equivalent,}  \nonumber\\ 
H_{P_\theta} : &D_H^h(P,P_{\theta }) < D_H^h(P,G_{\mu })& \hbox{ means that  $ P_\theta$ is better than $G_\mu$,}\nonumber\\ 
H_{G_\mu} : & D_H^h(P,P_{\theta }) > D_H^h(P,G_{\mu })&  \hbox{ means that  $ P_\theta$ is worse than $G_\mu$,} \nonumber\\ 
\nonumber 
\end{eqnarray}
\end{definition}
\begin{remark} 
\par 1) It does not require that the same divergence type statistics be used in forming $D_H^h(\widehat{P},P_{\widehat{\theta}_{PH}})$ 
and $D_H^h(\widehat{P},G_{\widehat{\mu}_{PH}})$. Choosing, however, different discrepancy for evaluating competing models is hardly justified. 
\par 2) This definition does not require that either of the competing models be correctly specified. On the other hand, a correctly specified model must be at least as good as any other model. 
\end{remark}

The following expression of the indicator $D_H^h(P,P_{\theta }) - D_H^h(P,G_{\mu })$ is unknown, but from the previous section, it can be estimated by the the difference 
    $$ \sqrt{n}\left[ D_H^h(\widehat{P},P_{\widehat{\theta}_{PH}})-D_H^h(\widehat{P},G_{\widehat{\mu}_{PH}})\right] $$
This difference converges to zero under the null hypothesis $H_0^{eq}$, but converges to a strictly negative or positive constant when $H_{P_\theta}$ or $H_{G_\mu} $ holds.\\
These properties actually justify the use of $D_H^h(\widehat{P},P_{\widehat{\theta}_{PH}})-D_H^h(\widehat{P},G_{\widehat{\mu}_{PH}})$ as a model selection indicator and common procedure of selecting the model with highest goodness-of-fit. \\ 
As argued in the introduction, however, it is important to take into account the random nature of the difference $D_H^h(\widehat{P},P_{\widehat{\theta}_{PH}})-D_H^h(\widehat{P},G_{\widehat{\mu}_{PH}})$ so as to assess its significance. To do so we consider the asymptotic distribution of  $ \sqrt{n}\left[ D_H^h(\widehat{P},P_{\widehat{\theta}_{PH}})-D_H^h(\widehat{P},G_{\widehat{\mu}_{PH}})\right] $ under $H_0^{eq}$. \\
Our major task is to to propose some tests for model selection, i.e., for the null hypothesis $H_0^{eq}$ against the alternative $H_{P_\theta}$ or $H_{G_\mu} $. We use the next lemma with $\widehat{\theta}_{PH}$ and $\widehat{\mu}_{PH}$ as the corresponding  minimum penalized Hellinger distance estimator of $\theta$ and $\mu$.\\
Using $P$ and $P_\theta$ defined earlier, we consider the vector 
$$ K_\theta^T=(k_1,\ldots,k_m) \hbox{  where }  k_i= \left(\dfrac{\partial }{\partial p_i^1} D_H^h(P^1, P^2 )\right)_{P^1=P, P^2=P_{\theta}}  \hbox{  with } i=1,\dots,m $$
$$ Q_\theta^T=(q_1,\ldots,q_m) \hbox{  where }  q_i= \left(\dfrac{\partial }{\partial p_i^2} D_H^h(P^1, P^2 )\right)_{P^1=P, P^2=P_{\theta}}  \hbox{  with } i=1,\dots,m $$
\begin{lemma}\label{lem1}
Under the assumptions of the Theorem \eqref{th3}, we have
\begin{enumerate}
\item[(i)] for the model $P_\theta$,\\
 $$
  D_H^h(\widehat{P},P_{\widehat{\theta}_{PH}}) = D_H^h(P,P_{\theta}) + K_{\theta}^T(\widehat{P}-P)+ 
  Q_{\theta}^T (P_{\widehat{\theta}_{PH}}-P_{\theta}) + o_P(1) $$
\item[(ii)] for model $G_\mu$,
$$
  D_H^h(\widehat{P},G_{\widehat{\mu}_{PH}}) = D_H^h(P,G_{\mu}) + K_{\mu}^T(\widehat{P}-P)+ 
  Q_{\mu}^T (G_{\widehat{\mu}_{PH}}-G_{\mu}) + o_P(1) $$
\end{enumerate}
\end{lemma}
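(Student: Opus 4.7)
The plan is to obtain both identities by applying a bivariate first-order Taylor expansion to the functional $(p^1,p^2)\mapsto D_H^h(p^1,p^2)$ around the deterministic point $(P,P_\theta)$ (respectively $(P,G_\mu)$), and then substituting the random arguments $(\widehat{P},P_{\widehat{\theta}_{PH}})$. The structure mirrors the linearization used in the proof of Theorem \ref{th3}, except that here I only need the stochastic $o_P(1)$ bound on the remainder rather than its asymptotic distribution.

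First I would verify the two consistency statements needed to control the remainder. For $\widehat{P}$: the empirical frequencies in \eqref{eq1} satisfy $\widehat{P}\stackrel{P}{\longmapsto}P$ by the weak law of large numbers applied coordinatewise. For the estimated parameter: by assumption $(A_1)$, $P_{\widehat{\theta}_{PH}}\stackrel{as}{\longmapsto}P_{\theta_1}$, where in the present setup $\theta$ plays the role of the population argmin $\theta_1$ of $PHD^h(P,P_\theta)$ (similarly $\mu$ for the second model). Hence both $\|\widehat{P}-P\|$ and $\|P_{\widehat{\theta}_{PH}}-P_\theta\|$ are $o_P(1)$.

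Next I would address the delicate point, namely the differentiability of $D_H^h$. The penalized Hellinger functional splits over the index sets $\varpi$ and $\varpi^c$, so it is only piecewise smooth; the partial derivatives $K_\theta$ and $Q_\theta$ as defined in the statement implicitly assume we are in the regime $\varpi=\{1,\ldots,m\}$. This is harmless asymptotically: by Birch's condition $1$, $p_i(\theta_0)>0$ for every $i$, so $\mathbb{P}(\widehat{p}_i>0)\to 1$ for each $i$ and consequently $\mathbb{P}(\varpi=\{1,\ldots,m\})\to 1$. On this event of probability tending to one, $D_H^h(p^1,p^2)=2\sum_{i=1}^m(\sqrt{p_i^1}-\sqrt{p_i^2})^2$, which is $C^\infty$ in a neighbourhood of $(P,P_\theta)$ since all coordinates are bounded away from zero there. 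Outside this event, the difference between the two expressions contributes $o_P(1)$ because the event itself has vanishing probability, so it is absorbed into the remainder.

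With smoothness in hand, Taylor's formula in two vector arguments gives
\begin{equation*}
D_H^h(\widehat{P},P_{\widehat{\theta}_{PH}})=D_H^h(P,P_\theta)+K_\theta^T(\widehat{P}-P)+Q_\theta^T(P_{\widehat{\theta}_{PH}}-P_\theta)+R_n,
\end{equation*}
with $R_n=o(\|\widehat{P}-P\|+\|P_{\widehat{\theta}_{PH}}-P_\theta\|)$ deterministically on the good event. Combining this with the two consistency statements above yields $R_n=o_P(1)$, which is exactly $(i)$. Part $(ii)$ is obtained by repeating the argument verbatim with $(G_\mu,\widehat{\mu}_{PH},K_\mu,Q_\mu)$ replacing $(P_\theta,\widehat{\theta}_{PH},K_\theta,Q_\theta)$, the only input being the analogue of $(A_1)$ for the second model. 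The main obstacle is really just the piecewise definition of $PHD^h$; once it is handled by the asymptotic non-emptiness argument above, the rest is a routine linearization.
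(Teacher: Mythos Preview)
Your approach is essentially the same as the paper's: the paper's entire proof is the single sentence ``The results follows from a first order Taylor expansion,'' and your proposal is a detailed execution of precisely that idea. Your extra care about the piecewise definition of $PHD^h$ and the $o_P(1)$ remainder is a welcome elaboration, though note one small slip: under the assumptions of Theorem \ref{th3} we are in the misspecified regime $P\neq P_\theta$, so the positivity of all cells should be argued from the true $P$ (i.e.\ $p_i>0$ for all $i$), not from Birch's condition on $P_{\theta_0}$.
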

\begin{proof}
{\bf proof. }  \\ The results follows from a first order Taylor expansion. 
\end{proof}

We define 
$$ \Gamma^2=(K_\theta-K_\mu ;Q_{\theta}-Q_{\mu} )^T\Lambda^\ast (K_\theta-K_\mu ;Q_{\theta}-Q_{\mu} )$$ 
which is the variance of $(K_\theta-K_\mu ;Q_{\theta}-Q_{\mu} )^T  \left(  
\begin{array}{rll}
\widehat{P} & -& P \\
P_{\widehat{\theta}_{PH}} & -& P_{\theta_1}
 \end{array}
  \right) $.  
  Since $K_\theta$, $K_\mu$, $ Q_{\theta}$, $Q_{\mu}$ and $\Lambda^\ast$
are consistently estimated by their sample analogues $K_{\widehat{\theta}}$, $ K_{\widehat{\mu}}$, $Q_{\widehat{\theta}}$, $ Q_{\widehat{\mu}}$ and  
${\widehat{\Lambda}}^\ast$,  hence $\Gamma^2$  is consistently estimated by 
$$ \widehat{\Gamma}^2=(K_{\widehat{\theta}}-K_{\widehat{\mu}}; Q_{\widehat{\theta}}- Q_{\widehat{\mu}})^T 
\widehat{\Lambda}^\ast (K_{\widehat{\theta}}-K_{\widehat{\mu}}; Q_{\widehat{\theta}}- Q_{\widehat{\mu}})$$ 

Next we define the model selection statistic and its asymptotic distribution under the null and alternatives hypothesis. \\ Let 
$$
\mathcal{HI}^h=\frac{ \sqrt{n}}  {\widehat{\Gamma}} \left\lbrace 
D_H^h(\widehat{P},P_{\widehat{\theta}_{PH}})-D_H^h(\widehat{P},G_{\widehat{\mu}_{PH}})
 \right\rbrace \nonumber \\
$$
where $\mathcal{HI}^h$ stands for the penalized Hellinger Indicator.\\ 
The following theorem provides the limit distribution of $\mathcal{HI}^h$ under the null and alternatives hypothesis.
\begin{theorem}
\label{th4}
Under the assumptions of theorem \eqref{th3}, suppose that \\ $\Gamma \neq 0$, then: 
\begin{enumerate}
\item[(i)] Under the null hypothesis $H_0^{eq}$, $\mathcal{HI}^h \stackrel{\mathcal{L}}{\longmapsto}  \mathcal{N}(0,1)$
\item[(ii)] Under the null hypothesis $H_{P_\theta}$, $\mathcal{HI}^h \longrightarrow -\infty $ in probability
\item[(iii)] Under the null hypothesis $H_{G_\mu}$, $\mathcal{HI}^h \longrightarrow +\infty$ in probability
\end{enumerate}
\end{theorem}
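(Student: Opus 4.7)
\textbf{Proof proposal.} The plan is to reduce each of the three parts to a direct application of Lemma \ref{lem1} together with the joint asymptotic normality assumption $(A_2)$ (suitably extended to include the second model $G_{\mu}$) and Slutsky's theorem.

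First I would apply Lemma \ref{lem1} to each of the two divergence statistics appearing in the numerator of $\mathcal{HI}^h$. Subtracting the two expansions yields
\begin{equation*}
D_H^h(\widehat P,P_{\widehat\theta_{PH}})-D_H^h(\widehat P,G_{\widehat\mu_{PH}})
= \bigl[D_H^h(P,P_\theta)-D_H^h(P,G_\mu)\bigr] + L_n + o_P(1),
\end{equation*}
where $L_n=(K_\theta-K_\mu)^T(\widehat P-P)+Q_\theta^T(P_{\widehat\theta_{PH}}-P_\theta)-Q_\mu^T(G_{\widehat\mu_{PH}}-G_\mu)$ is a linear functional of the centered random vector whose joint asymptotic normality is furnished (in stacked form) by $(A_2)$. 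This is the algebraic skeleton on which all three assertions rest.

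For part (i), under $H_0^{eq}$ the bracketed deterministic term vanishes, so $\sqrt n\{D_H^h(\widehat P,P_{\widehat\theta_{PH}})-D_H^h(\widehat P,G_{\widehat\mu_{PH}})\}=\sqrt n L_n+o_P(1)$. By $(A_2)$, $\sqrt n L_n$ is a linear image of an asymptotically Gaussian vector, so it converges in law to a centered normal with variance exactly
$\Gamma^2=(K_\theta-K_\mu;Q_\theta-Q_\mu)^T\Lambda^\ast(K_\theta-K_\mu;Q_\theta-Q_\mu).$
Because $K_{\widehat\theta},K_{\widehat\mu},Q_{\widehat\theta},Q_{\widehat\mu},\widehat\Lambda^\ast$ are consistent for their population counterparts (under the Birch regularity conditions and assumption $(A_1)$ applied to each model), we have $\widehat\Gamma\stackrel{P}{\longmapsto}\Gamma>0$, and Slutsky's theorem gives $\mathcal{HI}^h\stackrel{\mathcal L}{\longmapsto}\mathcal N(0,1)$.

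For parts (ii) and (iii), the same expansion shows that
\begin{equation*}
\frac{1}{\sqrt n}\mathcal{HI}^h
=\frac{1}{\widehat\Gamma}\Bigl\{\bigl[D_H^h(P,P_\theta)-D_H^h(P,G_\mu)\bigr]+L_n+o_P(1)\Bigr\}.
\end{equation*}
Under $(A_1)$ the bracketed constant is, respectively, a strictly negative or strictly positive number, while $L_n=o_P(1)$ (it is a linear functional of terms that are each $O_P(n^{-1/2})$) and $\widehat\Gamma\stackrel{P}{\longmapsto}\Gamma>0$. Multiplying back by $\sqrt n$ makes the bracketed term dominate, and $\mathcal{HI}^h$ diverges to $-\infty$ under $H_{P_\theta}$ and to $+\infty$ under $H_{G_\mu}$.

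The main obstacle is the first part: one must justify that the stacked vector $\sqrt n(\widehat P-P,P_{\widehat\theta_{PH}}-P_\theta,G_{\widehat\mu_{PH}}-G_\mu)$ is jointly asymptotically normal with the covariance structure encoded by $\Lambda^\ast$. The paper states $(A_2)$ for a single model, so a careful interpretation (or a mild strengthening) is needed so that the two minimum penalized Hellinger estimators, both driven by the same empirical vector $\widehat P$, admit a joint Gaussian limit; once this joint CLT is in hand, everything else is a routine linearization plus Slutsky argument.
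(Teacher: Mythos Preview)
Your overall architecture---expand both statistics via Lemma~\ref{lem1}, subtract, invoke a joint CLT on the linear part, then apply Slutsky with $\widehat\Gamma\to\Gamma>0$---matches the paper's strategy for part~(i), and your treatment of (ii)--(iii) is correct (the paper in fact leaves those parts implicit).

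The one substantive divergence is in how the null hypothesis is exploited. You keep the three-term linear functional $L_n=(K_\theta-K_\mu)^T(\widehat P-P)+Q_\theta^T(P_{\widehat\theta_{PH}}-P_\theta)-Q_\mu^T(G_{\widehat\mu_{PH}}-G_\mu)$ and then observe, correctly, that $(A_2)$ as written covers only one model, so a $3m$-dimensional joint CLT must be assumed or argued. The paper sidesteps this entirely: under $H_0^{eq}$ it imposes the stronger identification $P_\theta=G_\mu$ and $P_{\widehat\theta_{PH}}=G_{\widehat\mu_{PH}}$, which collapses your three terms to the two-block form $(K_\theta-K_\mu,\,Q_\theta-Q_\mu)^T\bigl(\begin{smallmatrix}\widehat P-P\\ P_{\widehat\theta_{PH}}-P_\theta\end{smallmatrix}\bigr)$ and lets $(A_2)$ apply verbatim with the $2m\times 2m$ matrix $\Lambda^\ast$. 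This identification is exactly why the paper's $\Gamma^2$ has the shape it does; without it, the asymptotic variance of your $L_n$ would not in general equal the stated $\Gamma^2$, so your claim that $\sqrt n L_n$ has variance ``exactly'' $\Gamma^2$ is only consistent if you adopt the same reading of $H_0^{eq}$. Your route is the more careful Vuong-type argument and would be needed if $H_0^{eq}$ is interpreted merely as equality of distances rather than coincidence of pseudo-true distributions; the paper's route is shorter but relies on that stronger (and arguably unstated) interpretation of the null.
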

\begin{proof}
{\bf proof. }\\
From the lemma \eqref{lem1}, it follows that

\begin{eqnarray}
D_H^h(\widehat{P},P_{\widehat{\theta}_{PH}})-  D_H^h(\widehat{P},G_{\widehat{\mu}_{PH}})&=& D_H^h(P,P_{\theta})- D_H^h(P,G_{\mu}) + K_{\theta}^T(\widehat{P}-P)- K_{\mu}^T(\widehat{P}-P) \nonumber \\ 
&+&  Q_{\theta}^T (P_{\widehat{\theta}_{PH}}-P_{\theta}) - Q_{\mu}^T (G_{\widehat{\mu}_{PH}}-G_{\mu})  + o_P(1) 
 \nonumber  
\end{eqnarray}
Under $H_0^{eq}$ : $P_{\theta}=G_{\mu}$ and $P_{\widehat{\theta}_{PH}}=G_{\widehat{\mu}_{PH}}$ we get :
\begin{eqnarray}
D_H^h(\widehat{P},P_{\widehat{\theta}_{PH}})-  D_H^h(\widehat{P},G_{\widehat{\mu}_{PH}})&=&  K_{\theta}^T(\widehat{P}-P)- K_{\mu}^T(\widehat{P}-P) \nonumber \\ 
&+&  Q_{\theta}^T (P_{\widehat{\theta}_{PH}}-P_{\theta}) - Q_{\mu}^T (P_{\widehat{\theta}_{PH}}-P_{\theta}) + o_P(1) 
 \nonumber  \\
 &=&  \left( K_{\theta}- K_{\mu}, Q_{\theta}- Q_{\mu} \right)^T  \left( \begin{array}{c}
 \widehat{P}-P \nonumber \\
 P_{\widehat{\theta}_{PH}}-P_{\theta} \nonumber \\
 \end{array}   
 \right)+ o_P(1)      
\end{eqnarray}
Finally, applying the Central Limit Theorem and assumptions (A1)-(A2), we can now immediately obtain $ \mathcal{HI}^h \stackrel{\mathcal{L}}{\longmapsto}  \mathcal{N}(0,1). $
\end{proof}
\section{ Computational results}
\subsection{Example}
To illustrate the model procedure discussed in the preceding section,we consider an example. we need to define the competing models, the estimation method used for each competing model and the Hellinger penalized type statistic to measure the departure of each proposed parametric model from the true data generating process. \\
For our competing models, we consider the problem of choosing between the family of poisson distribution and the family of geometric distribution. The poisson distribution $P(\lambda)$ is parameterized by $\lambda$ and has density
\begin{displaymath}
f(x,\lambda)=\frac{\exp({-\lambda})\times{\lambda}^x}{x!}\quad \hbox{for $x \, \in \mathbf{N}$ and  zero otherwise}.
\end{displaymath}
The geometric distribution $G(p)$ is parameterized  by  $ p $ and has density 
\begin{displaymath}
g(x,p)=(1-p)^{x-1}\times p\quad \hbox{for $x \, \in  \mathbf{N^*}$ and  zero otherwise}.
\end{displaymath} 
We use the minimum penalized Hellinger distance statistic to evaluate the discrepancy of the proposed model from the true data generating process. We partition the real line into $m$ intervals \,$\{[C_{i-1},C_i[,\,i=1,\cdots,m\}$\, where \,$C_0=0$ and $C_m=+\infty$. The choice of the cells is discussed below. \\ 
The corresponding minimum penalized Hellinger distance estimator of $\lambda$ et $p$ are :
\begin{eqnarray*}
\hat{\lambda}_{PH}=\displaystyle arg  \min_{\lambda \in \Theta}  D_H^h ( \widehat{P}, P_\lambda) 
&=& arg  \min_{\lambda \in \Theta}  \left[ \sum_{i\in \varpi}^m ({f}_i^{1/2}-p^{1/2}_{i \lambda})^2 + h\sum_{i\in \varpi^c}^m p_{i \lambda}\right] \\
\hat{p}_{PH}=\displaystyle arg  \min_{p \in \Theta}  D_H^h ( \widehat{P}, P_p) 
&=& arg  \min_{p \in \Theta}  \left[ \sum_{i\in \varpi}^m ({f}_i^{1/2}-p^{1/2}_{i p})^2 + h\sum_{i\in \varpi^c}^m p_{i p}\right] \\ 
\end{eqnarray*}
$p_{i \lambda}$ and $p_{i p}$ and are probabilities of the cells \,$[C_{i-1},C_i[$ \,under the  poisson and geometric  true distribution respectively.\\
We consider various sets of experiments in which data are generated from the mixture of a poisson and geometric distribution.\,These two distributions are calibrated so that their two means are close (4 and 5 respectively). Hence the DGP (Data Generating Process) is generated from  $M(\pi)$ with the density 
\begin{displaymath}
m(\pi)=\pi\ Pois(4)+(1-\pi)\ Geom(0.2)
\end{displaymath}    
where \,$\pi (\pi \in [0,1])$\, is specific value to  each set of experiments.
\begin{figure}
\begin{center}
 \fbox{
\includegraphics[scale=0.34]{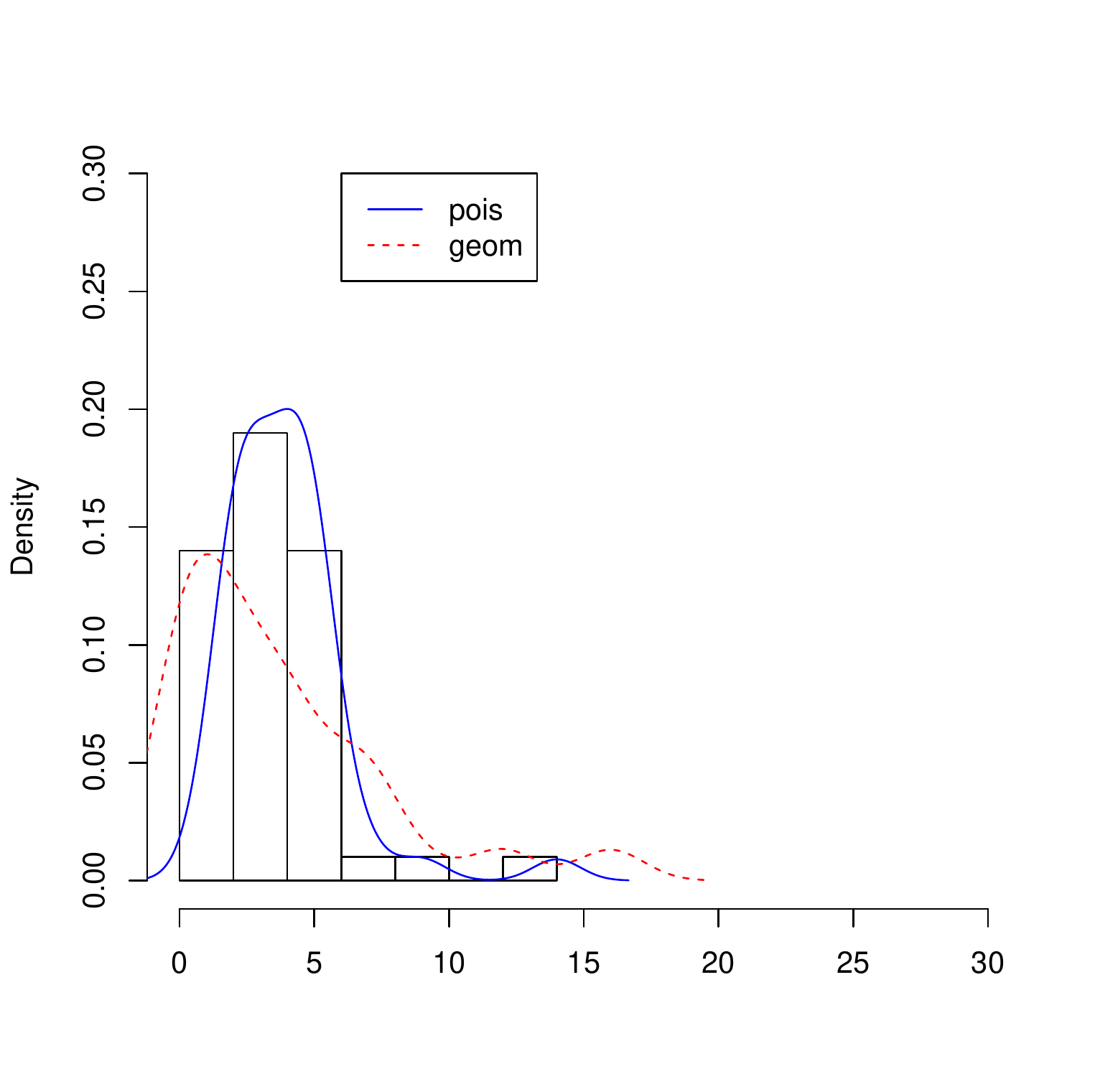}
}
\hspace{1cm}
\fbox{
\includegraphics[scale=0.34]{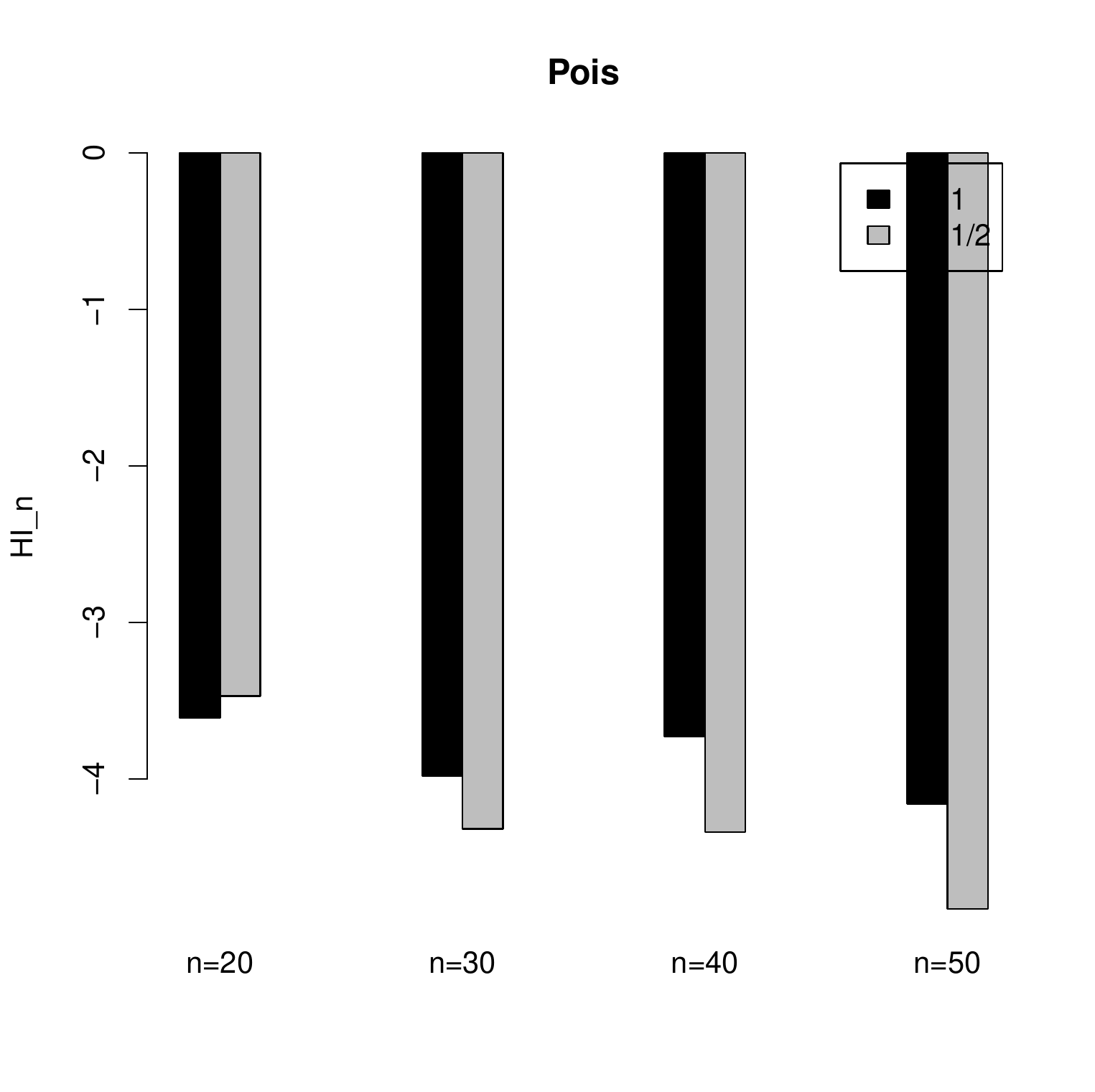}
}
\end{center}
\vspace{0.5cm}
\scriptsize {Figure 1 : Histogram of DGP=Pois(4) with n=50 } \ \ \ {\scriptsize{Figure 2 : Comparative barplot of \,$HI_n$\, depending \,$n$}}
\label{hist2}
\end{figure}
In each set of experiment several random sample are drawn from this mixture of distributions.\,The sample size varies from \,$20$\, to \,$300$,\, and for each sample size the number of replication is \,$1000$.\, In each set of experiment,\, we choose two values of the parameter \,$h=1$\, and \, $h=1/2$, where \,$h=1$\, corresponds to the classic Hellinger distance.\,The aim is to compare the accuracy of the selection model depending on the parameter setting chosen.
\begin{table}
\begin{center}
\tiny{
\begin{tabular}{|p{2cm}|p{1cm}|p{0.5cm}|p{0.5cm}|p{0.5cm}|p{0.5cm}|p{0.5cm}|p{0.5cm}|p{0.5cm}|p{0.5cm}|p{0.5cm}|p{0.5cm}|}
\hline
\multicolumn{2}{|c|}{n}&\multicolumn{2}{|c|}{20}&\multicolumn{2}{|c|}{30}&\multicolumn{2}{|c|}{40}&\multicolumn{2}{|c|}{50}&\multicolumn{2}{|c|}{300}\\
\hline
\multicolumn{2}{|c|}{$\widehat{p}$}&\multicolumn{2}{|c|}{0.210(0.03)}&\multicolumn{2}{|c|}{0.195(0.03)}&\multicolumn{2}{|c|}{0.	197(0.02)}&\multicolumn{2}{|c|}{0.205(0.02)}&\multicolumn{2}{|c|}{0.201(0.01)}\\
\hline
\multicolumn{2}{|c|}{$\widehat{\lambda}$}&\multicolumn{2}{|c|}{3.950(0.46)}&\multicolumn{2}{|c|}{4.090(0.4)}&\multicolumn{2}{|c|}{4.015(0.31)}&\multicolumn{2}{|c|}{4.015(0.28)}&\multicolumn{2}{|c|}{4.011(0.13)}\\
\hline \hline
DHP(Pois)&h=1&\multicolumn{2}{|c|}{0.133(0.07)}&\multicolumn{2}{|c|}{0.081(0.05)}&\multicolumn{2}{|c|}{0.059(0.03)}&\multicolumn{2}{|c|}{0.042(0.03)}&\multicolumn{2}{|c|}{0.037(0.01)}\\
\hline
\cline{2-10}
\multicolumn{1}{|c|}{}&h=1/2&\multicolumn{2}{|c|}{0.096(0.04)}&\multicolumn{2}{|c|}{0.064(0.03)}&\multicolumn{2}{|c|}{0.048(0.02)}&\multicolumn{2}{|c|}{0.034(0.02)}&\multicolumn{2}{|c|}{0.03(0.01)}\\
\hline
DHP(Geom)&h=1&\multicolumn{2}{|c|}{0.391(0.28)}&\multicolumn{2}{|c|}{0.348(0.12)}&\multicolumn{2}{|c|}{0.298(0.09)}&\multicolumn{2}{|c|}{0.282(0.10)}&\multicolumn{2}{|c|}{0.271(0.05)}\\
\hline
\cline{2-10}
\multicolumn{1}{|c|}{}&h=1/2&\multicolumn{2}{|c|}{0.278(0.07)}&\multicolumn{2}{|c|}{0.262(0.08)}&\multicolumn{2}{|c|}{0.242(0.06)}&\multicolumn{2}{|c|}{0.236(0.06)}&\multicolumn{2}{|c|}{0.231(0.03)}\\
\hline \hline
\multicolumn{1}{|c|}{$\mathcal{HI}^h$}&\multicolumn{1}{|c|}{$h=1/2$}&\multicolumn{2}{|c|}{-3.67(2.14)}&\multicolumn{2}{|c|}{-4.32(2.69)}&\multicolumn{2}{|c|}{-4.34(2.38)}&\multicolumn{2}{|c|}{-4.83(2.52)}&\multicolumn{2}{|c|}{-4.97(2.18)}\\
\hline
\cline{2-10}
\multicolumn{1}{|c|}{}&\multicolumn{1}{|c|}{Correct}&\multicolumn{2}{|c|}{77\%}&\multicolumn{2}{|c|}{87\%}&\multicolumn{2}{|c|}{92\%}&\multicolumn{2}{|c|}{96\%}&\multicolumn{2}{|c|}{100\%}\\
\multicolumn{1}{|c|}{}&\multicolumn{1}{|c|}{Indecisive}&\multicolumn{2}{|c|}{23\%}&\multicolumn{2}{|c|}{13\%}&\multicolumn{2}{|c|}{08\%}&\multicolumn{2}{|c|}{04\%}&\multicolumn{2}{|c|}{00\%}\\
\multicolumn{1}{|c|}{}&\multicolumn{1}{|c|}{Incorrect}&\multicolumn{2}{|c|}{00\%}&\multicolumn{2}{|c|}{00\%}&\multicolumn{2}{|c|}{00\%}&\multicolumn{2}{|c|}{00\%}&\multicolumn{2}{|c|}{00\%}\\
\hline \hline
\multicolumn{1}{|c|}{$\mathcal{HI}^h$}&\multicolumn{1}{|c|}{$h=1$}&\multicolumn{2}{|c|}{-3.61(3.03)}&\multicolumn{2}{|c|}{-3.98(2.48)}&\multicolumn{2}{|c|}{-3.73(2.29)}&\multicolumn{2}{|c|}{-4.16(2.35)}&\multicolumn{2}{|c|}{-4.25(1.87)}\\
\hline
\cline{2-10}
\multicolumn{1}{|c|}{}&\multicolumn{1}{|c|}{Correct}&\multicolumn{2}{|c|}{70\%}&\multicolumn{2}{|c|}{79\%}&\multicolumn{2}{|c|}{83\%}&\multicolumn{2}{|c|}{86\%}&\multicolumn{2}{|c|}{93\%}\\
\multicolumn{1}{|c|}{}&\multicolumn{1}{|c|}{Indecisive}&\multicolumn{2}{|c|}{30\%}&\multicolumn{2}{|c|}{21\%}&\multicolumn{2}{|c|}{17\%}&\multicolumn{2}{|c|}{14\%}&\multicolumn{2}{|c|}{07\%}\\
\multicolumn{1}{|c|}{}&\multicolumn{1}{|c|}{Incorrect}&\multicolumn{2}{|c|}{00\%}&\multicolumn{2}{|c|}{00\%}&\multicolumn{2}{|c|}{00\%}&\multicolumn{2}{|c|}{00\%}&\multicolumn{2}{|c|}{00\%}\\
\hline
\multicolumn{12}{c}{\small{\textit{\scriptsize {\ Table 1 : DGP=Pois(4)}}}}
\end{tabular}}
\end{center}
\end{table}
In order a perfect fit by the proposed method, for the chosen parameters of these two distributions, we note that most of the mass is concentrated between 0 and 10. Therefore, the chosen partition has eight cells defined by \,$\{[C_{i-1}, C_i[=[i-1, i[,\,i=1,\cdots,7\}$\ and  $[C_7, C_8[=[7, +\infty[$ represents the last cell. 
We choose different values of \,$\pi$\, which are \,$0.00,\,0.25, 0.535,\,0.75,\,1.00$.\,Although our proposed model selection procedure does not require that the data generating process belong to either of the competing models,\, we consider the two limiting cases \,$\pi=1.00$\, and \,$\pi=0.00$\, for they correspond to the correctly specified cases.\,To investigate the case where both competing models are misspecified but not at equal distance from the DGP,\, we consider the case  $\pi=0.25$, $\pi=0.75$ and $\pi=0.535$. The former case correspond to a DGP  which is poisson but slightly contaminated by a geometric distribution.\,The second case is interpreted similarly as a geometric  slightly contaminated by a poisson distribution. 
In the last case,  $\pi=0.535$ is the value for which the poisson $D_H^h(\widehat{P},P_{\widehat{\lambda}_{PH}})$ and the geometric $D_H^h(\widehat{P},G_{\widehat{p}_{PH}})$ families are approximatively at equal distance to the mixture $m(\pi)$ according to the penalized Hellinger distance with the above cells. Thus this set of experiments corresponds approximatively to the null hypothesis of our proposed model selection test  $\mathcal{HI}^h$. 
\begin{figure}
\begin{center}
\fbox{

\includegraphics[scale=0.34]{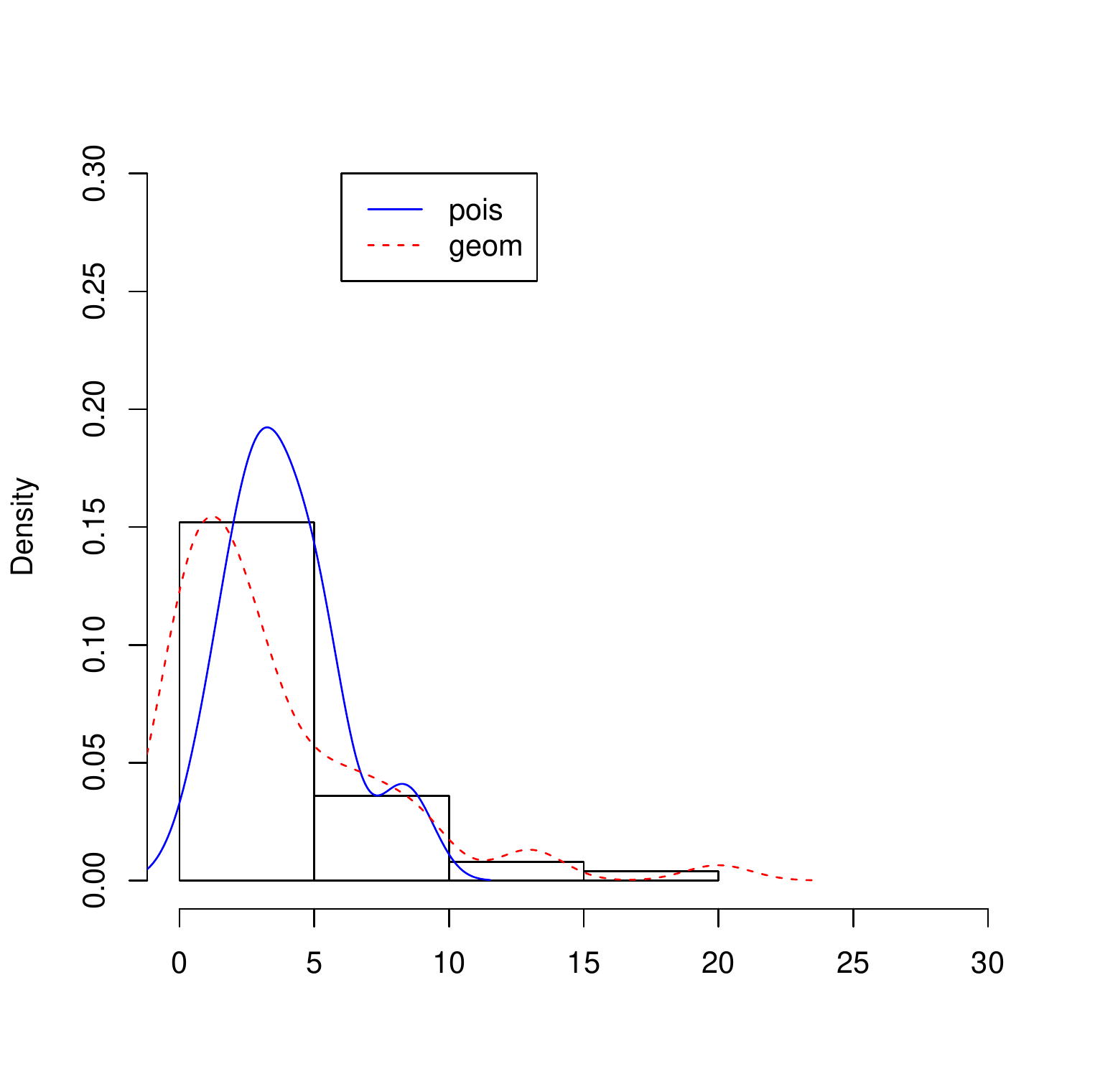}
}
\hspace{1cm}
\fbox{
\includegraphics[scale=0.34]{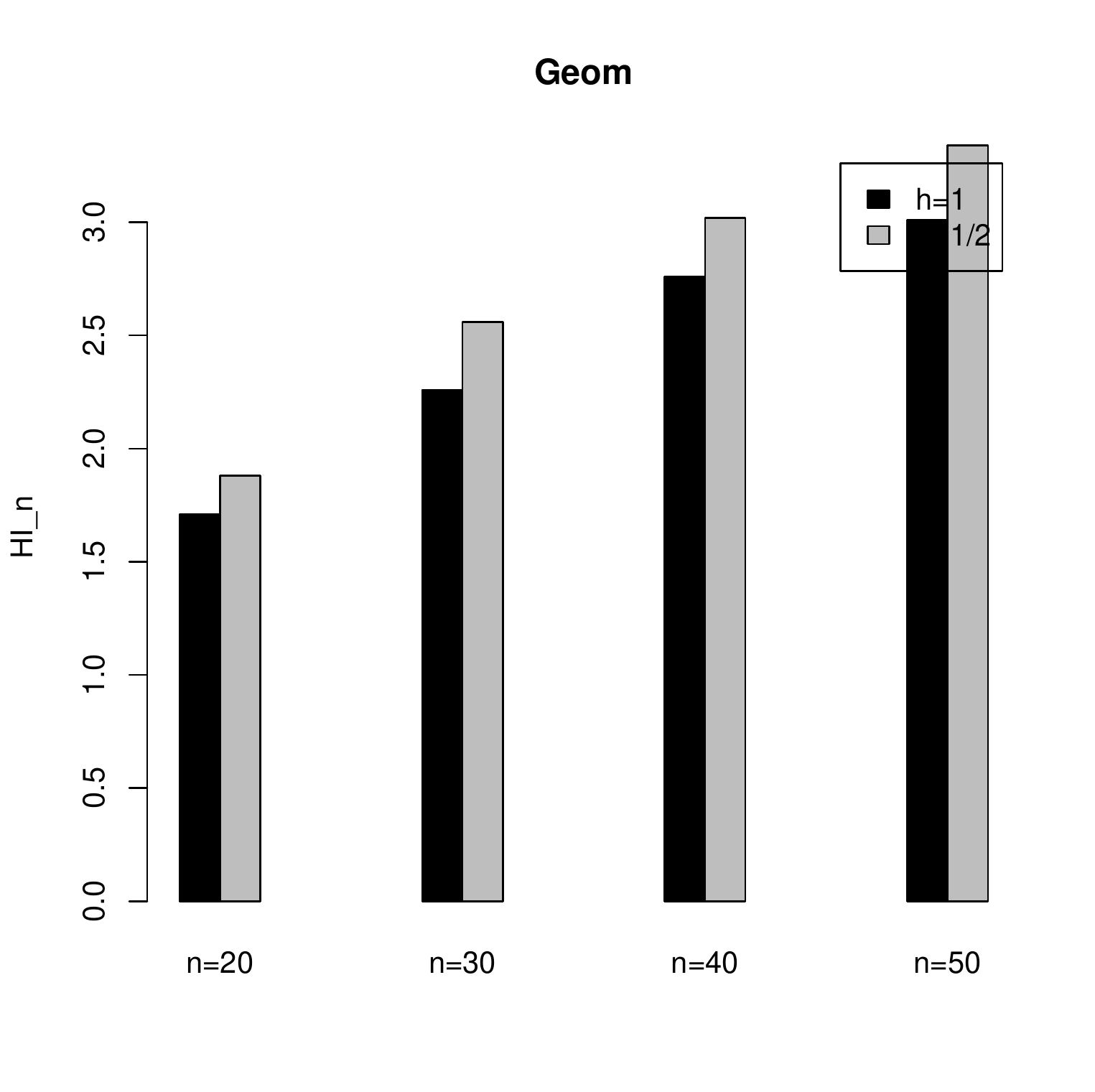}
}
\end{center}
\vspace{0.5cm}
\scriptsize {Figure 3 : Histogram of DGP=Geom(0.2) with n=50 } \ {\scriptsize{Figure 4 : Comparative barplot of \,$HI_n$\, depending,$n$}}
\label{stat3}
\end{figure}
The results of our different sets of experiments are presented in table 1-5.\, The first half of each table gives the average values of the the minimum penalized Hellinger distance estimator\,$\widehat{\lambda}_{PH}$\, and \,$\widehat{p}_{PH}$,\,the penalized Hellinger goodness-of-fit statistics \,$D_H^h(\widehat{P},P_{\widehat{\lambda}_{PH}})$\, and \,$D_H^h(\widehat{P},G_{\widehat{p}_{PH}})$,\, and the Hellinger indicator statistic \,$\mathcal{HI}^h$. The values in parentheses are standard errors. The second half of each table gives in percentage the number of times our proposed model selection procedure based on $\mathcal{HI}^h$  favors the poisson model,\,the geometric model,\,and indecisive.\,The tests are conducted at \,$5\%$\, nominal significance level.

\begin{table}
\begin{center}
\tiny{
\begin{tabular}{|p{2cm}|p{1cm}|p{0.5cm}|p{0.5cm}|p{0.5cm}|p{0.5cm}|p{0.5cm}|p{0.5cm}|p{0.5cm}|p{0.5cm}|p{0.5cm}|p{0.5cm}|}
\hline
\multicolumn{2}{|c|}{n}&\multicolumn{2}{|c|}{20}&\multicolumn{2}{|c|}{30}&\multicolumn{2}{|c|}{40}&\multicolumn{2}{|c|}{50}&\multicolumn{2}{|c|}{300}\\
\hline
\multicolumn{2}{|c|}{$\widehat{p}$}&\multicolumn{2}{|c|}{0.196(0.04)}&\multicolumn{2}{|c|}{0.213(0.03)}&\multicolumn{2}{|c|}{0.203(0.02)}&\multicolumn{2}{|c|}{0.203(0.02)}&\multicolumn{2}{|c|}{0.201(0,01)}\\
\hline
\multicolumn{2}{|c|}{$\widehat{\lambda}$}&\multicolumn{2}{|c|}{3.920(1.0)}&\multicolumn{2}{|c|}{4.206(0.89)}&\multicolumn{2}{|c|}{4.021(0.67)}&\multicolumn{2}{|c|}{4.109(0.58)}&\multicolumn{2}{|c|}{4.03(0.34)}\\
\hline \hline
DHP(Pois)&h=1.0&\multicolumn{2}{|c|}{0.356(0.14)}&\multicolumn{2}{|c|}{0.309(0.10)}&\multicolumn{2}{|c|}{0.271(0.09)}&\multicolumn{2}{|c|}{0.253(0.08)}&\multicolumn{2}{|c|}{0.244(0.07)}\\
\hline
\cline{2-10}
\multicolumn{1}{|c|}{}&h=0.5&\multicolumn{2}{|c|}{0.281(0.1)}&\multicolumn{2}{|c|}{0.273(0.07)}&\multicolumn{2}{|c|}{0.254(0.07)}&\multicolumn{2}{|c|}{0.246(0.07)}&\multicolumn{2}{|c|}{0.237(0.02)}\\
\hline
DHP(Geom)&h=1&\multicolumn{2}{|c|}{0.150(0.06)}&\multicolumn{2}{|c|}{0.089(0.05)}&\multicolumn{2}{|c|}{0.053(0.03)}&\multicolumn{2}{|c|}{0.039(0.02)}&\multicolumn{2}{|c|}{0.033(0.01)}\\
\cline{2-10}
\hline
\multicolumn{1}{|c|}{}&h=1/2&\multicolumn{2}{|c|}{0.103(0.04)}&\multicolumn{2}{|c|}{0.067(0.03)}&\multicolumn{2}{|c|}{0.044(0.02)}&\multicolumn{2}{|c|}{0.035(0.02)}&\multicolumn{2}{|c|}{0.027(0.98)}\\
\hline \hline
\multicolumn{1}{|c|}{$\mathcal{HI}^h$}&\multicolumn{1}{|c|}{$h=1/2$}&\multicolumn{2}{|c|}{1.880(1.43)}&\multicolumn{2}{|c|}{2.560(1.37)}&\multicolumn{2}{|c|}{3.020(1.25)}&\multicolumn{2}{|c|}{3.340(1.14)}&\multicolumn{2}{|c|}{3.40(1.03)}\\ \hline
\cline{2-10}
\multicolumn{1}{|c|}{}&\multicolumn{1}{|c|}{Correct}&\multicolumn{2}{|c|}{42\%}&\multicolumn{2}{|c|}{72\%}&\multicolumn{2}{|c|}{81\%}&\multicolumn{2}{|c|}{90\%}&\multicolumn{2}{|c|}{97\%}\\
\multicolumn{1}{|c|}{}&\multicolumn{1}{|c|}{Indecisive}&\multicolumn{2}{|c|}{58\%}&\multicolumn{2}{|c|}{28\%}&\multicolumn{2}{|c|}{19\%}&\multicolumn{2}{|c|}{10\%}&\multicolumn{2}{|c|}{03\%}\\
\multicolumn{1}{|c|}{}&\multicolumn{1}{|c|}{Incorrect}&\multicolumn{2}{|c|}{00\%}&\multicolumn{2}{|c|}{00\%}&\multicolumn{2}{|c|}{00\%}&\multicolumn{2}{|c|}{00\%}&\multicolumn{2}{|c|}{00\%}\\
\hline \hline
\multicolumn{1}{|c|}{$\mathcal{HI}^h$}&\multicolumn{1}{|c|}{$h=1$}&\multicolumn{2}{|c|}{1.710(1.07)}&\multicolumn{2}{|c|}{2.260(1.05)}&\multicolumn{2}{|c|}{2.760(0.96)}&\multicolumn{2}{|c|}{3.01(0.65)}&\multicolumn{2}{|c|}{4.19(0.32)}\\ \hline
\cline{2-10}
\multicolumn{1}{|c|}{}&\multicolumn{1}{|c|}{Correct}&\multicolumn{2}{|c|}{36\%}&\multicolumn{2}{|c|}{62\%}&\multicolumn{2}{|c|}{77\%}&\multicolumn{2}{|c|}{84\%}&\multicolumn{2}{|c|}{92\%}\\
\multicolumn{1}{|c|}{}&\multicolumn{1}{|c|}{Indecisive}&\multicolumn{2}{|c|}{64\%}&\multicolumn{2}{|c|}{38\%}&\multicolumn{2}{|c|}{23\%}&\multicolumn{2}{|c|}{16\%}&\multicolumn{2}{|c|}{08\%}\\
\multicolumn{1}{|c|}{}&\multicolumn{1}{|c|}{Incorrect}&\multicolumn{2}{|c|}{00\%}&\multicolumn{2}{|c|}{00\%}&\multicolumn{2}{|c|}{00\%}&\multicolumn{2}{|c|}{00\%}&\multicolumn{2}{|c|}{00\%}\\
\hline
\multicolumn{10}{c}{\small{\textit{\scriptsize {\ Table 2 : DGP=Geom(0.2)}}}}
\end{tabular}}
\label{tab1}
\end{center}
\end{table}

 In the first two sets of experiments ($\pi=0.00 \hbox{ and } \pi=1.00$) where one model is correctly specified, we use the labels {\it `correct'}, {\it `incorrect'} and {\it `indecisive'} when a choice is made. 
\begin{figure}
\begin{center}
 \fbox{
\includegraphics[scale=0.34]{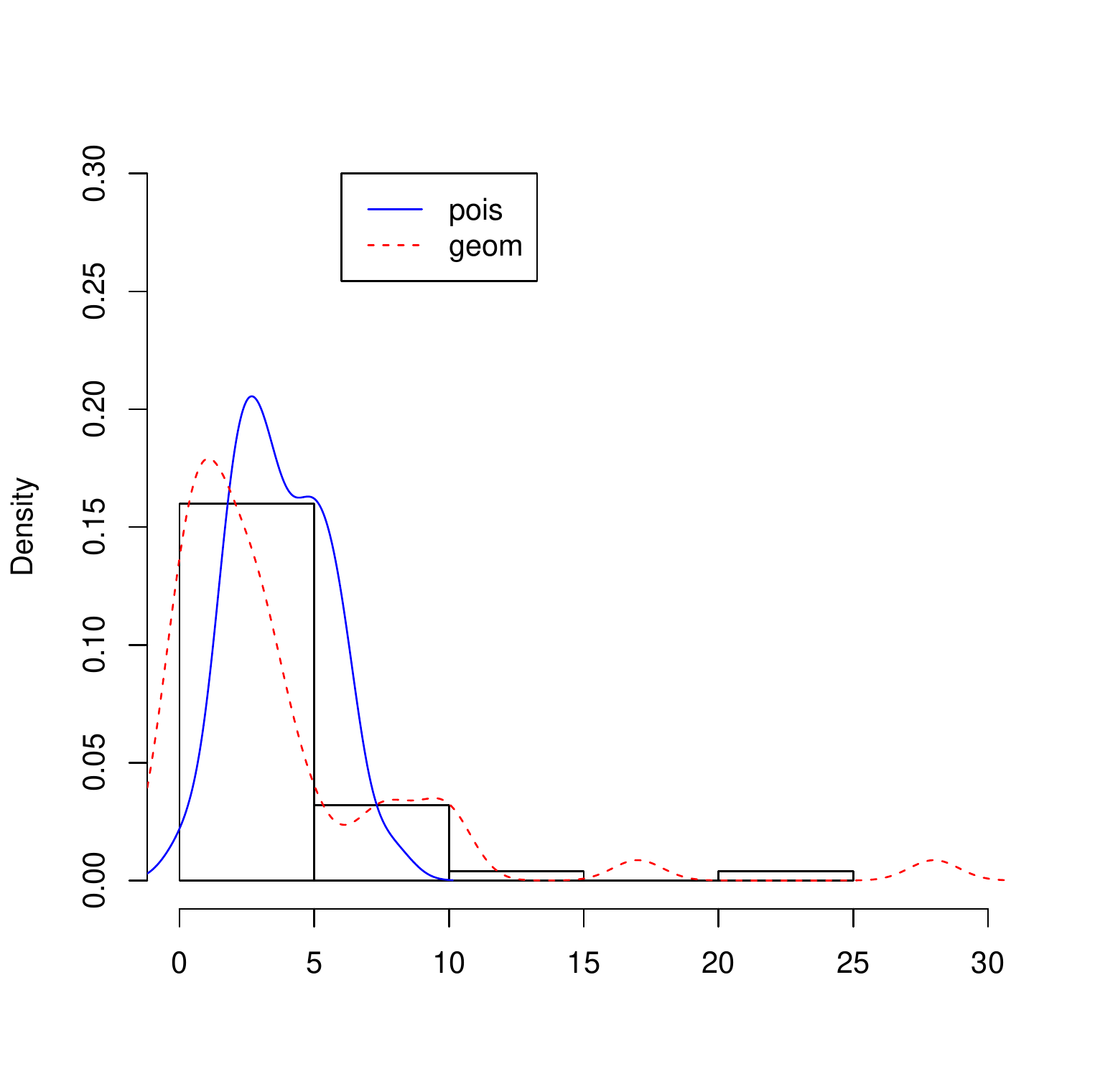}
}
\hspace{1cm}
\fbox{
\includegraphics[scale=0.34]{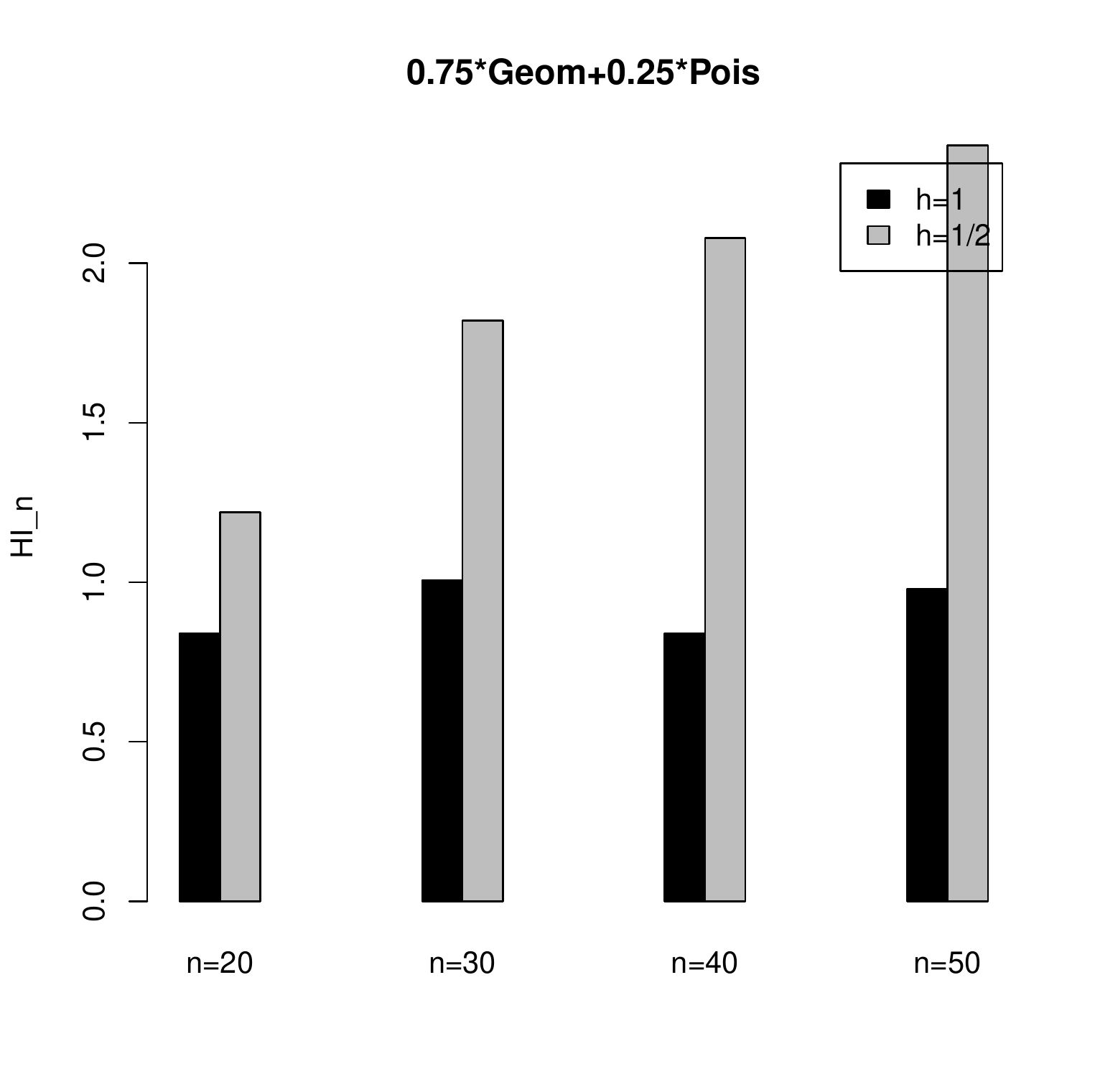}
}
\end{center}
\vspace{0.5cm}
\scriptsize {Figure 5 : Histogram of \\ DGP=0.75$\times$Geom+0.25$\times$Pois with n=50 } \hspace{1.3cm} {\scriptsize{Figure 6 : Comparative barplot of \,$HI_n$\, depending \,$n$ \\ }}
\label{hist3}
\end{figure}
The first halves of tables 1-5 confirm our asymptotic results. They all show that the minimum penalized Hellinger estimators $\widehat{\lambda}_{PH}$ and $\widehat{p}_{_{PH}}$ converge to their pseudo-true values in the misspecified cases and to their true values in the correctly specified cases as the sample size increases . With respect to our $\mathcal{HI}^h$, it diverges to $-\infty$ or $+\infty$ at the approximate rate of $\sqrt{n}$  except in the table 5. In the latter case the $\mathcal{HI}^h$ statistic converges, as expected, to zero which is the mean of the asymptotic $\mathcal{N}(0,1)$ distribution  under our null hypothesis of equivalence.   
\begin{table}
\begin{center}
\tiny{
\begin{tabular}{|p{2cm}|p{1cm}|p{0.5cm}|p{0.5cm}|p{0.5cm}|p{0.5cm}|p{0.5cm}|p{0.5cm}|p{0.5cm}|p{0.5cm}|p{0.5cm}|p{0.5cm}|}
\hline
\multicolumn{2}{|c|}{n}&\multicolumn{2}{|c|}{20}&\multicolumn{2}{|c|}{30}&\multicolumn{2}{|c|}{40}&\multicolumn{2}{|c|}{50}&\multicolumn{2}{|c|}{300}\\
\hline
\multicolumn{2}{|c|}{$\widehat{p}$}&\multicolumn{2}{|c|}{0.213(0.13)}&\multicolumn{2}{|c|}{0.197(0.12)}&\multicolumn{2}{|c|}{0.208(0.08)}&\multicolumn{2}{|c|}{0.202(0.05)}&\multicolumn{2}{|c|}{0.202(0.01)}\\
\hline
\multicolumn{2}{|c|}{$\widehat{\lambda}$}&\multicolumn{2}{|c|}{4.160(0.72)}&\multicolumn{2}{|c|}{3.910(0.55)}&\multicolumn{2}{|c|}{4.180(0.55)}&\multicolumn{2}{|c|}{3.970(0.43)}&\multicolumn{2}{|c|}{4.022(0.21)}\\
\hline \hline
DHP(Pois)&h=1&\multicolumn{2}{|c|}{0.546(0.13)}&\multicolumn{2}{|c|}{0.472(0.1)}&\multicolumn{2}{|c|}{0.412(0.09)}&\multicolumn{2}{|c|}{0.402(0.08)}&\multicolumn{2}{|c|}{0.367(0.06)} \\ \hline
\cline{2-10}
\multicolumn{1}{|c|}{}&h=1/2&\multicolumn{2}{|c|}{0.344(0.07)}&\multicolumn{2}{|c|}{0.340(0.05)}&\multicolumn{2}{|c|}{0.320(0.05)}&\multicolumn{2}{|c|}{0.311(0.05)}&\multicolumn{2}{|c|}{0.304(0.03)}\\
DHP(Geom)&h=1&\multicolumn{2}{|c|}{0.150(0.06)}&\multicolumn{2}{|c|}{0.089(0.05)}&\multicolumn{2}{|c|}{0.053(0.03)}&\multicolumn{2}{|c|}{0.039(0.02)}&\multicolumn{2}{|c|}{0.021(0.01)}\\ \hline
\cline{2-10}
\multicolumn{1}{|c|}{}&h=1/2&\multicolumn{2}{|c|}{-3.67(2.62)}&\multicolumn{2}{|c|}{-4.32(2.53)}&\multicolumn{2}{|c|}{-4.34(2.47)}&\multicolumn{2}{|c|}{-4.83(2.27)}&\multicolumn{2}{|c|}{-5.37(2.01)}\\
\hline \hline
\multicolumn{1}{|c|}{$\mathcal{HI}^h$}&\multicolumn{1}{|c|}{$h=1/2$}&\multicolumn{2}{|c|}{1.220(1.02)}&\multicolumn{2}{|c|}{1.820(0.89)}&\multicolumn{2}{|c|}{2.080(1.12)}&\multicolumn{2}{|c|}{2.370(0.99)}&\multicolumn{2}{|c|}{3.102(0.84)}\\ \hline
\cline{2-10}
\multicolumn{1}{|c|}{}&\multicolumn{1}{|c|}{Geom}&\multicolumn{2}{|c|}{23\%}&\multicolumn{2}{|c|}{40\%}&\multicolumn{2}{|c|}{50\%}&\multicolumn{2}{|c|}{64\%}&\multicolumn{2}{|c|}{81\%}\\
\multicolumn{1}{|c|}{}&\multicolumn{1}{|c|}{Indecisive}&\multicolumn{2}{|c|}{77\%}&\multicolumn{2}{|c|}{60\%}&\multicolumn{2}{|c|}{50\%}&\multicolumn{2}{|c|}{36\%}&\multicolumn{2}{|c|}{19\%}\\
\multicolumn{1}{|c|}{}&\multicolumn{1}{|c|}{Pois}&\multicolumn{2}{|c|}{00\%}&\multicolumn{2}{|c|}{00\%}&\multicolumn{2}{|c|}{00\%}&\multicolumn{2}{|c|}{00\%}&\multicolumn{2}{|c|}{00\%}\\
\hline \hline
\multicolumn{1}{|c|}{$\mathcal{HI}^h$}&\multicolumn{1}{|c|}{$h=1$}&\multicolumn{2}{|c|}{0.840(1.29)}&\multicolumn{2}{|c|}{0.831(1.27)}&\multicolumn{2}{|c|}{0.845(1.16)}&\multicolumn{2}{|c|}{0.967(1.05)}&\multicolumn{2}{|c|}{1.131(0.78)}\\ \hline
\cline{2-10}
\multicolumn{1}{|c|}{}&\multicolumn{1}{|c|}{Geom}&\multicolumn{2}{|c|}{17\%}&\multicolumn{2}{|c|}{15\%}&\multicolumn{2}{|c|}{19\%}&\multicolumn{2}{|c|}{22\%}&\multicolumn{2}{|c|}{33\%}\\
\multicolumn{1}{|c|}{}&\multicolumn{1}{|c|}{Indecisive}&\multicolumn{2}{|c|}{80\%}&\multicolumn{2}{|c|}{83\%}&\multicolumn{2}{|c|}{89\%}&\multicolumn{2}{|c|}{77\%}&\multicolumn{2}{|c|}{66\%}\\
\multicolumn{1}{|c|}{}&\multicolumn{1}{|c|}{Pois}&\multicolumn{2}{|c|}{03\%}&\multicolumn{2}{|c|}{02\%}&\multicolumn{2}{|c|}{02\%}&\multicolumn{2}{|c|}{01\%}&\multicolumn{2}{|c|}{01\%}\\
\hline
\multicolumn{10}{c}{\small{\textit{\scriptsize {\ Table 3 : DGP=0.75$\times$Geom(0.2)+0.25$\times$Pois(4)}}}}
\end{tabular}}
\label{tab3}
\end{center}
\end{table}

With the exception of table 1 and 2, we observed a large percentage of incorrect decisions. This is because both models are now incorrectly specified. In contrast, turning to the second halves of the tables 1-2, we first note  that the  percentage of correct choices using  $\mathcal{HI}^h$ statistic steadily increases and ultimately converges to $100\%$.

\begin{figure}
\begin{center}
 \fbox{
\includegraphics[scale=0.34]{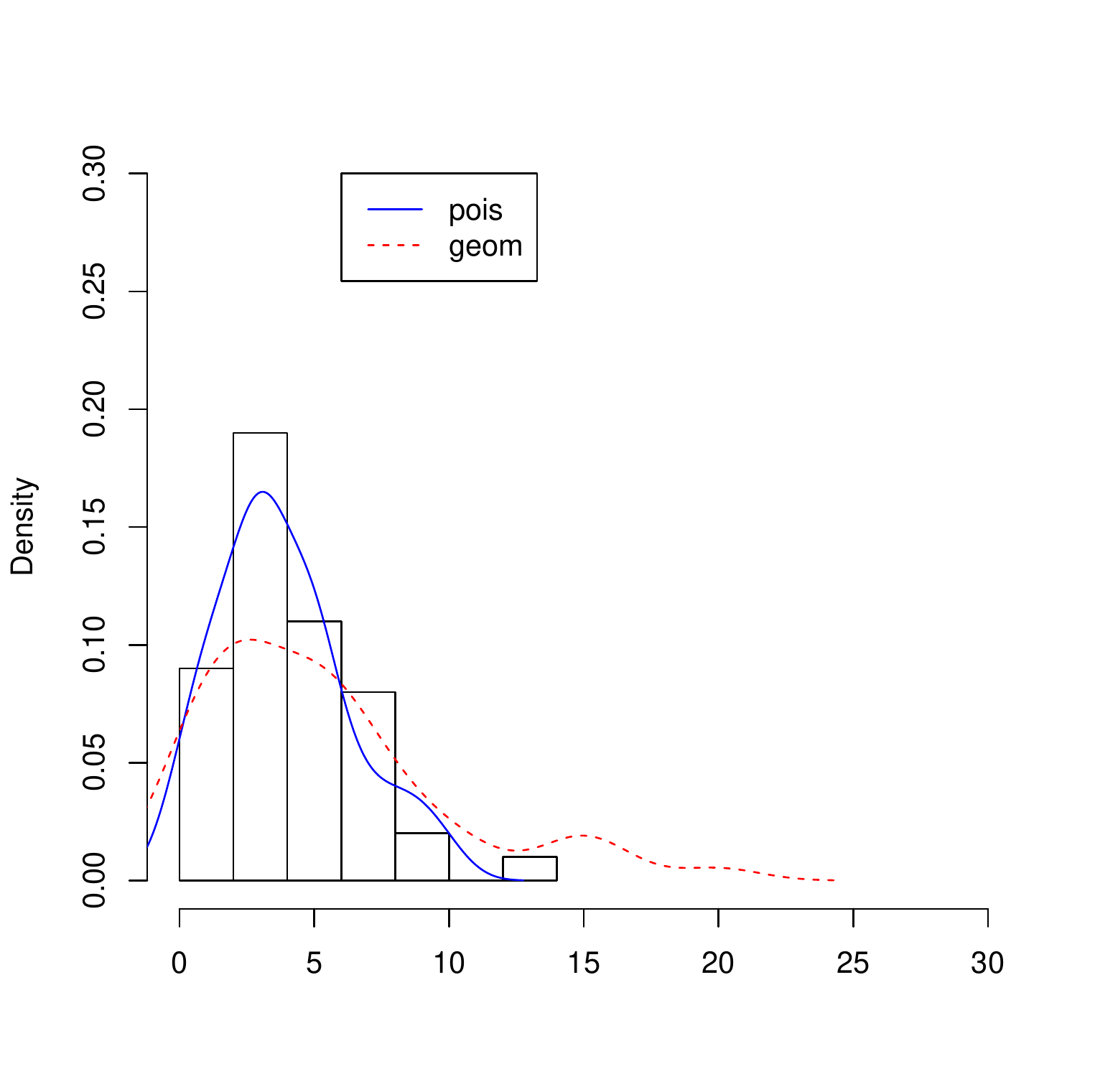}
}
\hspace{1cm}
\fbox{
\includegraphics[scale=0.34]{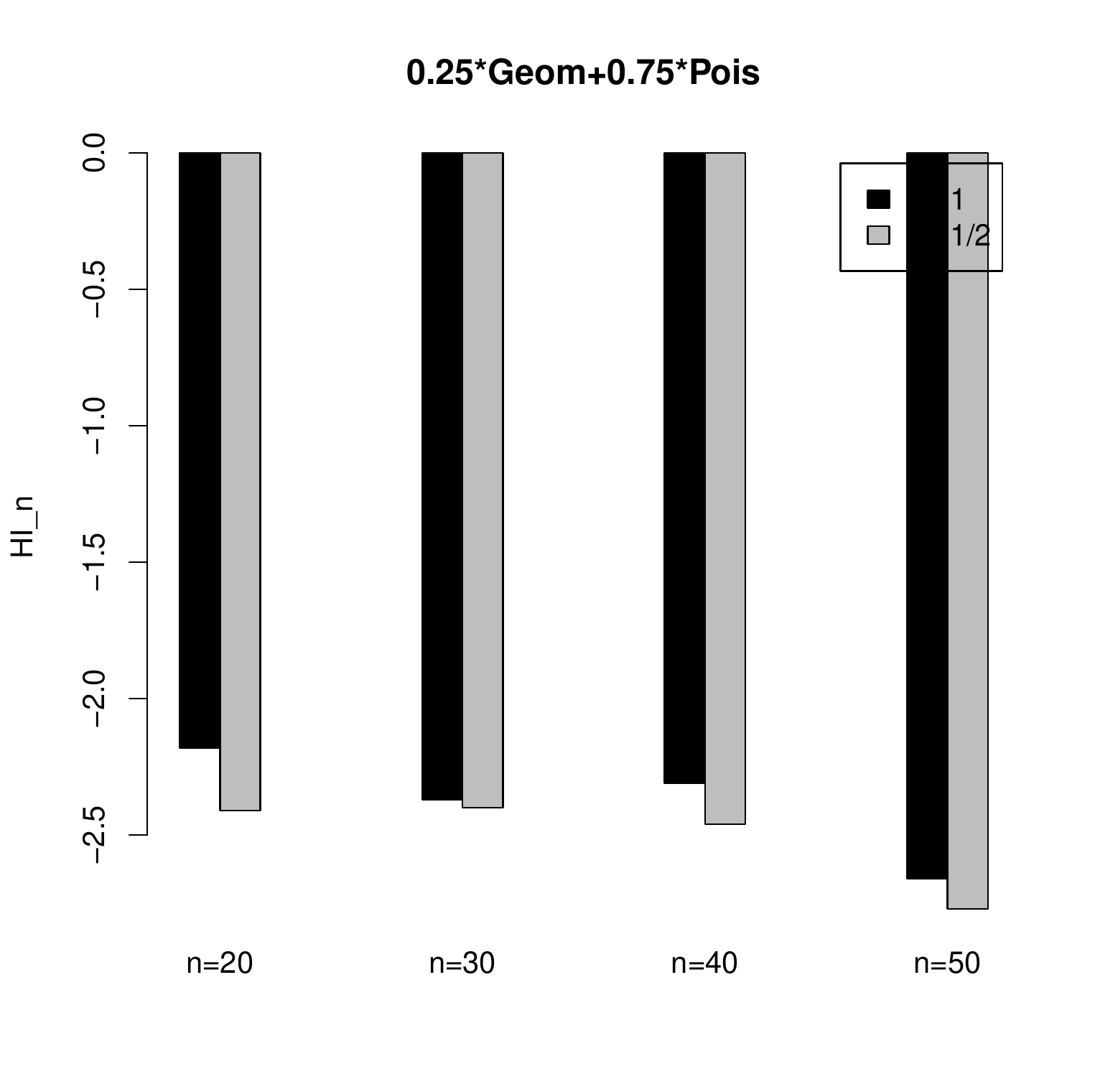}
}
\end{center}
\vspace{0.5cm}
\scriptsize {Figure 7 : Histogram of \\ DGP=0.25$\times$Geom+0.75$\times$Pois with n=50 } \hspace{0.5cm} {\scriptsize{Figure 8 : Comparative barplot of \,$HI_n$\, depending \,$n$ \\ }}

\label{hist4}
\end{figure}

The preceding comments for the second halves of tables 1 and 2 also apply to the second halves of 
tables 3 and 4. In all  tables (1,2,3 and 4), the results confirm, in small samples, the relative domination of the model selection procedure based on the penalized Hellinger statistic test ($h=1/2$) than the other corresponding to the choice  of classical Hellinger statistic test ($h=1$), in percentages of correct decisions. Table 5 also confirms our asymptotics results : as sample size incerases, the percentage of rejection of both models converges, as it should, to 100\%. 

\begin{table}
\begin{center}
\tiny{
\begin{tabular}{|p{2cm}|p{1cm}|p{0.5cm}|p{0.5cm}|p{0.5cm}|p{0.5cm}|p{0.5cm}|p{0.5cm}|p{0.5cm}|p{0.5cm}|p{0.5cm}|p{0.5cm}|}
\hline
\multicolumn{2}{|c|}{n}&\multicolumn{2}{|c|}{20}&\multicolumn{2}{|c|}{30}&\multicolumn{2}{|c|}{40}&\multicolumn{2}{|c|}{50}&\multicolumn{2}{|c|}{300}\\
\hline
\multicolumn{2}{|c|}{$\widehat{p}$}&\multicolumn{2}{|c|}{0.213(0.03)}&\multicolumn{2}{|c|}{0.212(0.03)}&\multicolumn{2}{|c|}{0.210(0.02)}&\multicolumn{2}{|c|}{0.206(0.02)}&\multicolumn{2}{|c|}{0.203(0.01)}\\
\hline
\multicolumn{2}{|c|}{$\widehat{\lambda}$}&\multicolumn{2}{|c|}{4.110(0.43)}&\multicolumn{2}{|c|}{4.090(0.31)}&\multicolumn{2}{|c|}{3.970(0.28)}&\multicolumn{2}{|c|}{4.020(0.26)}&\multicolumn{2}{|c|}{4.019(0.17)}\\ \hline
\hline 
DHP(Pois)&h=1&\multicolumn{2}{|c|}{1.779(0.45)}&\multicolumn{2}{|c|}{1.634(0.30)}&\multicolumn{2}{|c|}{1.650(0.28)}&\multicolumn{2}{|c|}{1.570(0.24)}&\multicolumn{2}{|c|}{1.520(0.21)}\\ \hline
\cline{2-10}
\multicolumn{1}{|c|}{}&h=1/2&\multicolumn{2}{|c|}{1.443(0.24)}&\multicolumn{2}{|c|}{1.473(0.21)}&\multicolumn{2}{|c|}{1.520(0.20)}&\multicolumn{2}{|c|}{1.500(0.18)}&\multicolumn{2}{|c|}{1.483(0.14)}\\ \hline
DHP(Geom)&h=1&\multicolumn{2}{|c|}{2.055(0.35)}&\multicolumn{2}{|c|}{1.870(0.25)}&\multicolumn{2}{|c|}{1.860(0.21)}&\multicolumn{2}{|c|}{1.790(0.19)}&\multicolumn{2}{|c|}{1.704(0.11)}\\ \hline
\cline{2-10}
\multicolumn{1}{|c|}{}&h=1/2&\multicolumn{2}{|c|}{1.640(0.15)}&\multicolumn{2}{|c|}{1.660(0.15)}&\multicolumn{2}{|c|}{1.700(0.14)}&\multicolumn{2}{|c|}{1.690(0.13)}&\multicolumn{2}{|c|}{1.632(0.10)}\\ \hline
\hline 
\multicolumn{1}{|c|}{$\mathcal{HI}^h$}&\multicolumn{1}{|c|}{$h=1/2$}&\multicolumn{2}{|c|}{-2.40(1.27)}&\multicolumn{2}{|c|}{-2.44(1.1)}&\multicolumn{2}{|c|}{-2.49(1.08)}&\multicolumn{2}{|c|}{-2.77(1.01)}&\multicolumn{2}{|c|}{-2.89(0.92)}\\ \hline
\cline{2-10}
\multicolumn{1}{|c|}{}&\multicolumn{1}{|c|}{Geom}&\multicolumn{2}{|c|}{00\%}&\multicolumn{2}{|c|}{00\%}&\multicolumn{2}{|c|}{00\%}&\multicolumn{2}{|c|}{00\%}&\multicolumn{2}{|c|}{00\%}\\
\multicolumn{1}{|c|}{}&\multicolumn{1}{|c|}{Indecisive}&\multicolumn{2}{|c|}{38\%}&\multicolumn{2}{|c|}{37\%}&\multicolumn{2}{|c|}{32\%}&\multicolumn{2}{|c|}{27\%}&\multicolumn{2}{|c|}{21\%}\\
\multicolumn{1}{|c|}{}&\multicolumn{1}{|c|}{Pois}&\multicolumn{2}{|c|}{62\%}&\multicolumn{2}{|c|}{63\%}&\multicolumn{2}{|c|}{68\%}&\multicolumn{2}{|c|}{83\%}&\multicolumn{2}{|c|}{79\%}\\
\hline \hline
\multicolumn{1}{|c|}{$\mathcal{HI}^h$}&\multicolumn{1}{|c|}{$h=1$}&\multicolumn{2}{|c|}{-2.18(1.37)}&\multicolumn{2}{|c|}{-2.37(1.33)}&\multicolumn{2}{|c|}{-2.31(1.36)}&\multicolumn{2}{|c|}{-2.66(1.18)}&\multicolumn{2}{|c|}{-2.83(1.06)}\\ \hline
\cline{2-10}
\multicolumn{1}{|c|}{}&\multicolumn{1}{|c|}{Geom}&\multicolumn{2}{|c|}{00\%}&\multicolumn{2}{|c|}{00\%}&\multicolumn{2}{|c|}{00\%}&\multicolumn{2}{|c|}{00\%}&\multicolumn{2}{|c|}{00\%}\\
\multicolumn{1}{|c|}{}&\multicolumn{1}{|c|}{Indecisive}&\multicolumn{2}{|c|}{48\%}&\multicolumn{2}{|c|}{45\%}&\multicolumn{2}{|c|}{46\%}&\multicolumn{2}{|c|}{30\%}&\multicolumn{2}{|c|}{24\%}\\
\multicolumn{1}{|c|}{}&\multicolumn{1}{|c|}{Pois}&\multicolumn{2}{|c|}{52\%}&\multicolumn{2}{|c|}{55\%}&\multicolumn{2}{|c|}{54\%}&\multicolumn{2}{|c|}{70\%}&\multicolumn{2}{|c|}{76\%}\\
\hline
\multicolumn{10}{c}{\small{\textit{\scriptsize {\ Table 4 : DGP=0.75$\times$Pois(4)+0.25$\times$Geom(0.2)}}}}
\end{tabular}}
\label{tab4}
\end{center}
\end{table}

In figures 1, 3, 5, 7 and 9 we plot the histogramm of datasets and overlay the curves for 
Geometric and poisson distribution. When the DGP is correctly specified figure 1, the poisson 
distribution  has a reasonable chance of being distinguished from geometric distribution. 

\begin{figure}
\begin{center}
 \fbox{
\includegraphics[scale=0.34]{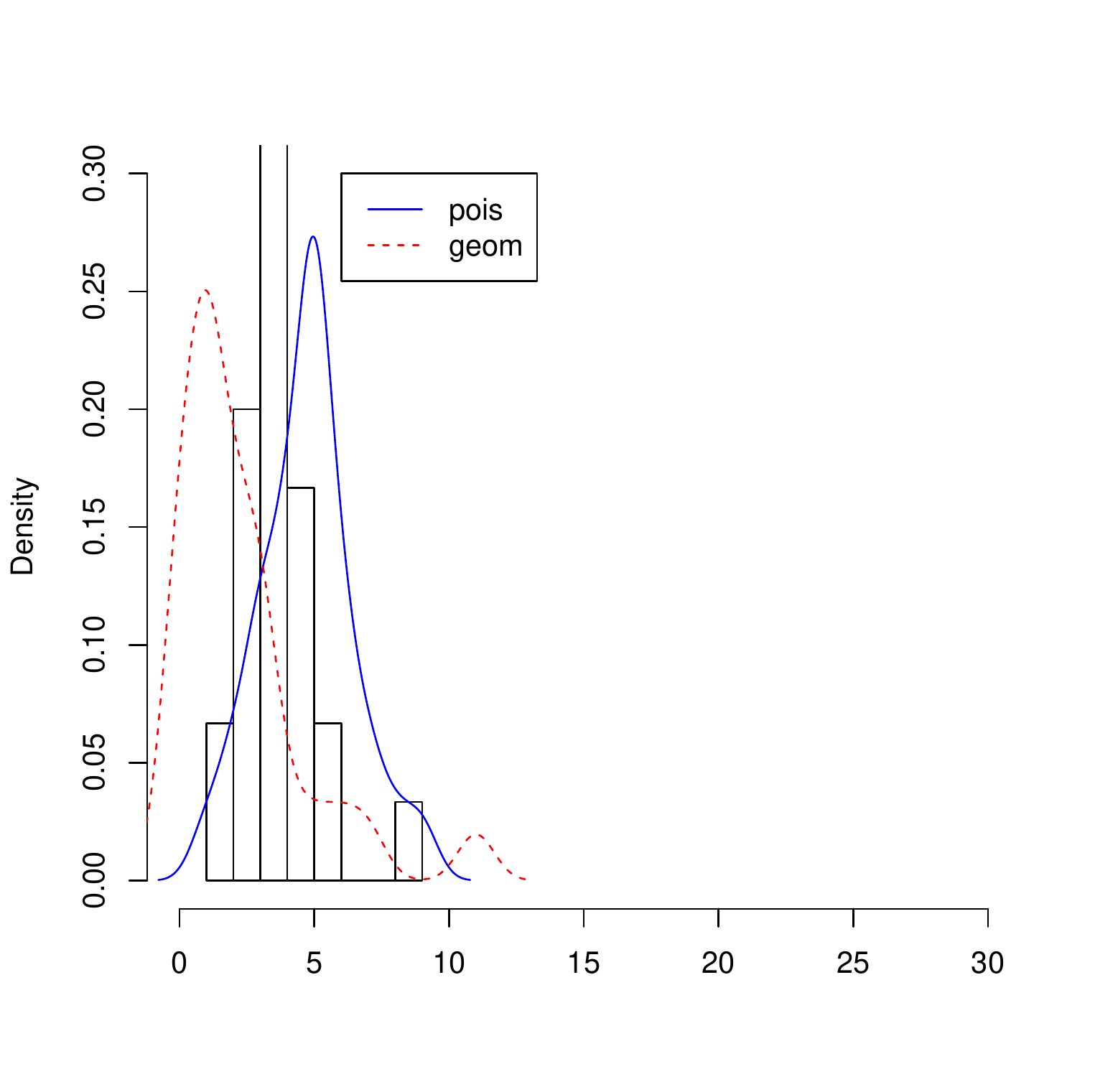}
}
\hspace{1cm}
\fbox{
\includegraphics[scale=0.34]{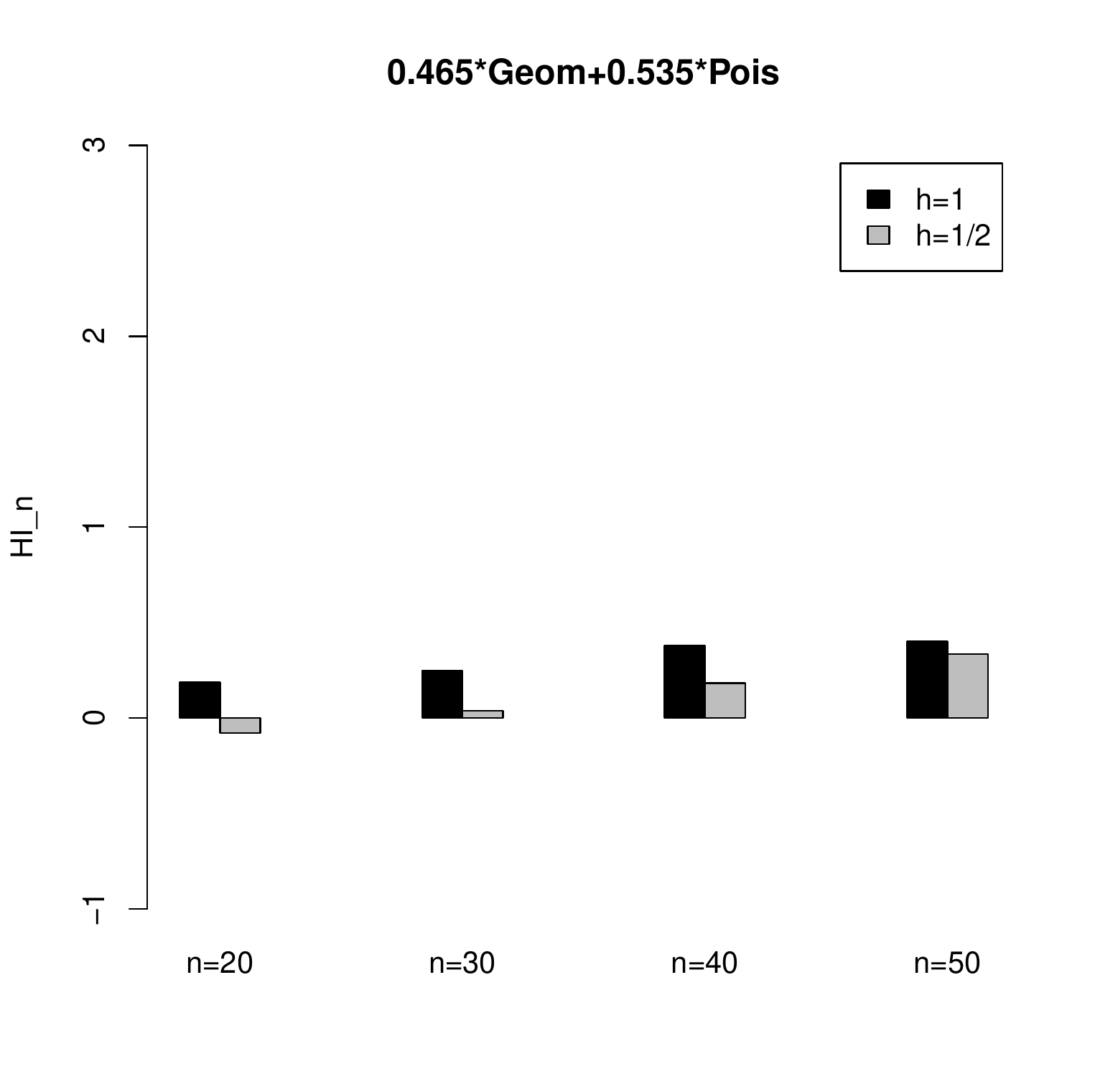}
}
\end{center}
\vspace{0.5cm}
\scriptsize {Figure 9 : Histogram of \\ DGP=0.465$\times$Geom+0.535$\times$Pois with n=50 } \hspace{0.5cm} {\scriptsize{Figure 10 : Comparative barplot of \,$HI_n$\, depending \,$n$ \\ }}
\label{hist5}
\end{figure}

Similarly, in figure 3, as can be seen, the geometric distribution closely approximates the data sets. 
In figures 5 and 7 the two distributions are close but the geometric (figure 5) and the poisson distributions (figure 7) does appear to be much closer to the data sets. When $\pi=0.535$, the distributions  for both (figure 9) poisson distribution and geometric distribution are similar, while being slightly symmetrical about the axis that passes through the mode of data distribution. This follows from the fact that these two distributions are equidistant from the DGP.

and would be difficult to distinguish from data in practice.  

 \begin{table}
 \begin{center}
\tiny{
\begin{tabular}{|p{2cm}|p{1cm}|p{0.5cm}|p{0.5cm}|p{0.5cm}|p{0.5cm}|p{0.5cm}|p{0.5cm}|p{0.5cm}|p{0.5cm}|p{0.5cm}|p{0.5cm}|}
\hline
\multicolumn{2}{|c|}{n}&\multicolumn{2}{|c|}{20}&\multicolumn{2}{|c|}{30}&\multicolumn{2}{|c|}{40}&\multicolumn{2}{|c|}{50}&\multicolumn{2}{|c|}{300}\\
\hline
\multicolumn{2}{|c|}{$\widehat{p}$}&\multicolumn{2}{|c|}{0.196(0.06)}&\multicolumn{2}{|c|}{0.204(0.05)}&\multicolumn{2}{|c|}{0.211(0.03)}&\multicolumn{2}{|c|}{0.213(0.207)}&\multicolumn{2}{|c|}{0.204(0.01)}\\
\hline
\multicolumn{2}{|c|}{$\widehat{\lambda}$}&\multicolumn{2}{|c|}{3.968(0.61)}&\multicolumn{2}{|c|}{3.962(0.46)}&\multicolumn{2}{|c|}{3.981(0.374)}&\multicolumn{2}{|c|}{4.023(0.309)}&\multicolumn{2}{|c|}{4.011(0.11)}\\
\hline \hline
DHP(Pois)&h=1&\multicolumn{2}{|c|}{2.869(0.63)}&\multicolumn{2}{|c|}{2.600(0.46)}&\multicolumn{2}{|c|}{2.582(0.36)}&\multicolumn{2}{|c|}{2.525(0.38)}&\multicolumn{2}{|c|}{2.311(0.25)}\\ \hline
\cline{2-10}
\multicolumn{1}{|c|}{}&h=1/2&\multicolumn{2}{|c|}{2.633(0.30)}&\multicolumn{2}{|c|}{2.492(0.28)}&\multicolumn{2}{|c|}{2.369(0.27)}&\multicolumn{2}{|c|}{2.302(0.26)}&\multicolumn{2}{|c|}{2.142(0.17)}\\
\hline
DHP(Geom)&h=1&\multicolumn{2}{|c|}{2.867(0.52)}&\multicolumn{2}{|c|}{2.682(0.37)}&\multicolumn{2}{|c|}{2.553(0.30)}&\multicolumn{2}{|c|}{2.495(0.20)}&\multicolumn{2}{|c|}{2.237(0.12)}\\ \hline
\cline{2-10}
\multicolumn{1}{|c|}{}&h=1/2&\multicolumn{2}{|c|}{2.157(0.21)}&\multicolumn{2}{|c|}{2.200(0.20)}&\multicolumn{2}{|c|}{2.263(0.20)}&\multicolumn{2}{|c|}{2.287(0.19)}&\multicolumn{2}{|c|}{2.291(0.15)}\\
\hline \hline
\multicolumn{1}{|c|}{$\mathcal{HI}^h$}&\multicolumn{1}{|c|}{$h=1/2$}&\multicolumn{2}{|c|}{-0.079(1.04)}&\multicolumn{2}{|c|}{0.038(1.05)}&\multicolumn{2}{|c|}{0.182(0.99)}&\multicolumn{2}{|c|}{0.334(1.10)}&\multicolumn{2}{|c|}{0.442(0.67)}\\ \hline
\cline{2-10}
\multicolumn{1}{|c|}{}&\multicolumn{1}{|c|}{Geom}&\multicolumn{2}{|c|}{03\%}&\multicolumn{2}{|c|}{04\%}&\multicolumn{2}{|c|}{05\%}&\multicolumn{2}{|c|}{10\%}&\multicolumn{2}{|c|}{13\%}\\
\multicolumn{1}{|c|}{}&\multicolumn{1}{|c|}{Indecisive}&\multicolumn{2}{|c|}{92\%}&\multicolumn{2}{|c|}{92\%}&\multicolumn{2}{|c|}{93\%}&\multicolumn{2}{|c|}{88\%}&\multicolumn{2}{|c|}{88\%}\\
\multicolumn{1}{|c|}{}&\multicolumn{1}{|c|}{Pois}&\multicolumn{2}{|c|}{05\%}&\multicolumn{2}{|c|}{04\%}&\multicolumn{2}{|c|}{02\%}&\multicolumn{2}{|c|}{02\%}&\multicolumn{2}{|c|}{01\%}\\ \hline
\hline
\multicolumn{1}{|c|}{$\mathcal{HI}^h$}&\multicolumn{1}{|c|}{$h=1$}&\multicolumn{2}{|c|}{0.186(1.14)}&\multicolumn{2}{|c|}{0.248(1.64)}&\multicolumn{2}{|c|}{0.378(0.90)}&\multicolumn{2}{|c|}{0.452(0.86)}&\multicolumn{2}{|c|}{0.617(0.73)}\\ \hline
\cline{2-10}
\multicolumn{1}{|c|}{}&\multicolumn{1}{|c|}{Geom}&\multicolumn{2}{|c|}{05\%}&\multicolumn{2}{|c|}{06\%}&\multicolumn{2}{|c|}{04\%}&\multicolumn{2}{|c|}{09\%}&\multicolumn{2}{|c|}{11\%}\\
\multicolumn{1}{|c|}{}&\multicolumn{1}{|c|}{Indecisive}&\multicolumn{2}{|c|}{92\%}&\multicolumn{2}{|c|}{90\%}&\multicolumn{2}{|c|}{95\%}&\multicolumn{2}{|c|}{90\%}&\multicolumn{2}{|c|}{88\%}\\
\multicolumn{1}{|c|}{}&\multicolumn{1}{|c|}{Pois}&\multicolumn{2}{|c|}{03\%}&\multicolumn{2}{|c|}{04\%}&\multicolumn{2}{|c|}{01\%}&\multicolumn{2}{|c|}{01\%}&\multicolumn{2}{|c|}{01\%}\\
\hline
\multicolumn{10}{c}{\small{\textit{\scriptsize {\ Table 5 : DGP=0.535$\times$Pois(4)+0.465$\times$Geom(0.2)}}}}
\end{tabular}}
\label{tab4}
\end{center}
\end{table}

The preceding results in tables and the theorem \eqref{th4} confirm, in figures 2, 4, 6 and 8, that 
the Hellinger indicator for the model selection procedure based on penalized hellinger divergence statistic with $h=0.5$ (light bars) dominates the procedure obtained with $h=1$ (dark bars) corresponding to the ordinary Hellinger distance. 
As expected, our statistic divergence  $\mathcal{HI}^h$ diverges to $-\infty$ (figure 2,  8) and to $+\infty$ (figure 4, figure 8) more rapidly when we use the penalized Hellinger distance test than the classical Hellinger distance test.\\Hence, 
Figure 10 \ allows a comparison with the asymptotic $\mathcal{N}(0,1)$ approximation under our null hypothesis of of equivalence. Hence the indicator $\mathcal{HI}^{1/2}$, based on the penaliezd Hellinger distance is closer to the mean of $\mathcal{N}(0,1)$ than is the indicator
$\mathcal{HI}^{1}$.

\section{Conclusion}
In this paper we investigated the problems of model selection using divergence type statistics. 
Specifically, we proposed some asymptotically standard normal and chi-square tests for model selection based on divergence type statistics that use the corresponding minimum penalized Hellinger estimator. Our tests are based on testing whether the competing models are equally close to the true distribution against the alternative hypotheses that one model is closer than the other where closeness of a model is measured according to the discrepancy implicit in the divergence type statistics used. The penalized Hellinger divergence criterion outperforms classical criteria for model selection based on the ordinary Hellinger distance, especially in small sample, the difference is expected to be minimal for large sample size. Our work can be extended in several directions. One extension is to use random instead of fixed cells. Random cells arise when the boundaries of each cell $c_i$ depend on some unknown parameter vector $\gamma$, which are estimated. For various examples, see e.g., Andrews (1988b). For instance, with appropriate random cells, the asymptotic distribution of a Pearson type statistic may become independent of the true parameter $\theta_0$ under correct specification. In view of this latter result, it is expected that our model selection test based on penalized Hellinger divergence measures will remain asymptotically normally or chi-square distributed. 

\bibliography{bsample}

\begin{thebibliography}{99}

\bibitem{ak1} H. Akaike, Information Theory and Extension of the Likelihood Ratio Principle,
 Proceedings of the second International symposium of Information Theory,ed. by Petrov,B.N.and Csaki,F.
Akademiai Kiado, (1973)257-281.

\bibitem{ak2} H. Akaike, New Look at Statistical Model Identification, IEEE Transaction on Information Theory,
11(4), (1974) 538-544.

\bibitem{and1} D.W.K Andrews, Chi-square Diagnostic Tests for Econometric Models:Theory, Econometrica,
56, (1988) 1419-1453.

\bibitem{bar} A. Bar-Hen and J.J Dandin, Generalisation of the Mahalanobis Distance in the Mixed Case,
Journal of Multivariate Analysis,53(2), (1995) 332-342.

\bibitem{bas} A.Basu and B.G.Lindsay, Minimum Disparity Estimation for Continuous Models:Efficiency, Distributions and Robustness. Ann.Inst.Statist.Math.,46(4), (1994) 683-705.

\bibitem{bas1} A. Basu, I. R. Harris and S. Basu, Tests of hypotheses in discrete models based on the penalized   Hellinger distance, Statistics and Probability letters 27,4, (1996) 367-373.

\bibitem{bas2} A. Basu and S. Basu,  Penalized minimum disparity methods for multinomial models,
Statistica Sinica 8, (1998) 841-860.


\bibitem{bas3} A. Basu, A. Mandal and L. Pardo, Hypothesis testing for two discrete populations based on the Hellinger distance, Statistics and Probability letters 80, (2010) 206-214.


\bibitem{bas4} A. Basu, S. Sarkar , A.N. Vidyashankar, Minimum negative exponential disparity estimation in parametric models,
J.Statist.Plann.Inference,58(2), (1997) 349-370. 


\bibitem{bera} R.J. Bearn, Minimum Hellinger distance estimates for parametric models, Ann.Math.Statist., 5, (1977) 445-463.

\bibitem{bir} M.W. Birch, The Detection of Partial Association, II : The general case
Journal of the Royal Statistical Society. Serie B (methodological) 27,N°1, (1965) 111-124.


\bibitem{coc} W.G Cochran, The $\chi^2$ Test of Goodness of Fit, Ann.Math.Statist., 23, (1952) 315-345.

\bibitem{cox} D.R Cox., Tests of separate families of hypotheses. Proceedings of the Fourth Berkeley Symposium on Mathematical Statistics and Probability 1, (1961) 105-123.

\bibitem{cox} D.R Cox., Further results on tests of separate families of hypotheses. Journal of Royal Statistical Society B 24, (1962) 406-424.


\bibitem{cres} N. Cressie and T.R.C. Read, Multinomial Goodness of Fit Test, \\
J.Roy.Statist.Soc.Ser.B, 46(3),(1984) 440-464.

\bibitem{csi} I. Csisz\'ar, Information-Type Measure of Difference of Probablity Distribution and Indirect Observations,
Studia Scientiarum Mathematicarum Hungarica, vol.2, (1967) 299-318.

\bibitem{don} D.L. Donoho and R.C. Liu, Geometrizing Rates of Convergence, Tech Report Num.138, (1988)
Berkeley,California.

\bibitem{har1} I.R. Harris and A. Basu, Hellinger Distance as Penalised Loglikelihood, Communications in Statistics,
Theory and Methods,21, (1994) 637-646.

 
\bibitem{har3} I.R. Harris, N.Hjort and M.C. Jones,  Robust and Efficient Estimation by Minimizing a Density power Divergence,
Biometrika,55, (1998) 549-559.

\bibitem{jos} J.P.W. Pluim, J.B.A. Maintz and A.M. Viergever, f-Information Measures in Medical Image Registration, 
IEEE TRANSACTIONS ON MEDICAL IMAGING, VOL. 23, NO. 12, (2004) 1508-1516. 


\bibitem{kul} S. Kullback and R.A. Leibler, On Information and Sufficiency,\\ Ann.Math.Statist,
vol.22,N°1, (1951) 79-86.

\bibitem{lie} F. Liese and I. Vajda, Convex Statistical Distance, vol.95 of Teubner-Texte zur Mathematik,
BSB B.G Teubner Verlagsgesellschaft,Leipzig (1987). 


\bibitem{lin} B.G. Lindsay, Efficiency Versus Robustness : the case for minimum distance Hellinger Distance and related Methods, Annals of Statistics, vol.22, N°2, (1994) 1081-1114.

\bibitem{man} A. Mandal, R. K.  Patra, and A.Basu, Minimum Hellinger Distance Estimation with Inlier Modification,
 Sankhya 70-B Part.2, (2008) 310-322.

\bibitem{moo1} D.S. Moore, Chi-Square Tests in Studies in Statistics,ed. by HOGG,R.V vol.19, (1978) 
The Mathematical Association of America.

\bibitem{moo2} D.S. Moore, Tests of Chi-Squared Type Goodness of Fit Techniques, ed. by d'AGOSTINO,R.B and
STEPHENS,M.A. (1986).

\bibitem{mor} D. Morales, L. Pardo and I. Vajda, Asymptotic Divergence of Estimates of Discrete Distributions,
J.Statist.Plann.Inference,48(3), (1995) 347-369.

\bibitem{sim} D.G, Simpson, Hellinger deviance test : efficiency, breakdown points and examples. 
 Journal of American Statistical Association, N° 84,(1989) 107-113.

\bibitem{tam} R. Tamura, D.D. Boos, Minimum Hellinger Distance Estimation for Multivariate location and covariance,
 Journal of American Statistical Association,Vol.81, N° 333,(1986) 223-229.

\bibitem{vaj} I. Vajda, Theory of Statistical Evidence and Information,Kluwe Academic Plubisher,Dordrecht (1989).

\bibitem{vuo1} Q.H. Vuong,  Likelihood Ratio Tests for Model Selection and Non-Nested Hypotheses,
Econometrika,57, (1989) 257-306.

\bibitem{vuo2} Q.H. Vuong and W. Wang, Minimum Chi-Square Estimation and Tests for Model Selection,
Journal of Econometrics,57, (1993) 141-168.

\bibitem{wat}G.S.  Watson, On the Construction of Significance Tests on the Circle and the Sphere,
Biometrika,43, (1956) 440-468.

\bibitem{zog} K. Zografos and K. Ferentinos, $\phi$-Divergence Statistics Sampling Properties and Multinomial
Goodness of fit and Divergence Tests, Comm.Statist., Theory Methods, 19 (5) (1990)
1785-1802.

\end{thebibliography}

\end{document}